\def\tB{\widetilde{B}}
\def\bX{{\bf X}}
\newcommand{\real}{\ensuremath{\mathbb{R}}}
\newcommand{\half}{\mbox{$\frac12$}}
\newtheorem{cor}{Corollary}
\newtheorem{thm}{Theorem}
\newtheorem{remark}{Remark}
\newtheorem{lem}{Lemma}
\newcommand{\B}{\boldsymbol}
\newcommand{\M}{\mathbf}
\newcommand{\sgn}{\operatorname{sgn}}
\newcommand{\rnk}{\mathrm{rank}}
\DeclareMathOperator*{\mini}{minimize}
\def\tU{{\tilde U}}
\def\tV{{\tilde V}}
\def\tD{{\tilde D}}
\newcommand{\Soft}[1]{{\cal S}_{#1}}
\newcommand{\BALD}{\begin{aligned}}
\newcommand{\EALD}{\end{aligned}}
\newcommand{\BALDS}{\begin{aligned*}}
\newcommand{\EALDS}{\end{aligned*}}
\newcommand{\BCAS}{\begin{cases}}
\newcommand{\ECAS}{\end{cases}}
\newcommand{\BEAS}{\begin{eqnarray*}}
\newcommand{\EEAS}{\end{eqnarray*}}
\newcommand{\BEQ}{\begin{equation}}
\newcommand{\EEQ}{\end{equation}}
\newcommand{\BIT}{\begin{itemize}}
\newcommand{\EIT}{\end{itemize}}
\newcommand{\BMAT}{\begin{bmatrix}}
\newcommand{\EMAT}{\end{bmatrix}}
\newcommand{\BNUM}{\begin{enumerate}}
\newcommand{\ENUM}{\end{enumerate}}
\newcommand{\BA}{\begin{array}}
\newcommand{\EA}{\end{array}}
\DeclareMathOperator*{\argmin}{\arg\min}
\newcommand{\diag}{\mathop{\mathbf{diag}}}
\newcommand{\minimize}{\operatornamewithlimits{minimize}}
\DeclareMathOperator{\rank}{rank}
\DeclareMathOperator{\trace}{tr}
\newcommand{\norm}[1]{\left\| #1 \right\|}
\begin{document}
\title{Matrix Completion and Low-Rank SVD via Fast Alternating Least Squares}
\author{Trevor Hastie \and Rahul Mazumder \and Jason D. Lee \and Reza Zadeh}
\date{Statistics Department and ICME\\Stanford University\\[8pt]\today}
\maketitle
\begin{abstract}
  The matrix-completion problem has attracted a lot of attention,
  largely as a result of the celebrated Netflix competition.  Two
  popular approaches for solving the problem are
  nuclear-norm-regularized matrix approximation
  \citep{candes-2009,mazumder09:_spect_regul_algor_for_learn}, and
  maximum-margin matrix factorization \citep{natis-05}. These two
  procedures are in some cases solving equivalent problems, but with
  quite different algorithms. In this article we bring the two
  approaches together, leading to an efficient algorithm for large
  matrix factorization and completion that outperforms both of these.
 We develop a software package
  \texttt{softImpute} in R for implementing our approaches, and a
  distributed version for very large matrices using the {\tt Spark}
  cluster programming environment
\end{abstract}
\section{Introduction}\label{sec:introduction}
We have an $m\times n$ matrix $X$ with observed entries
indexed by the set $\Omega$; i.e. $\Omega=\{(i,j): X_{ij}\mbox{ is
  observed}\}.$ Following \citet{candes-2009} we define the
projection $P_\Omega(X)$ to be the $m\times n$ matrix with the
observed elements of $X$ preserved, and the missing entries replaced
with $0$. Likewise $P_\Omega^\perp$ projects onto the complement of
the set $\Omega$.

Inspired by \citet{candes-2009}, \citet{mazumder09:_spect_regul_algor_for_learn} posed the following convex-optimization problem for completing $X$:
\begin{equation}
  \label{eq:1}
  \mini_M\;\; H(M):= \;\; \mbox{$\frac12$}\|P_\Omega(X-M)\|_F^2 +\lambda \|M\|_*,
\end{equation}
where the {\em nuclear norm} $\|M\|_*$ is the sum of the singular values of $M$ (a convex relaxation of the rank).
They developed a simple iterative algorithm for solving (\ref{eq:1}), with the following two steps iterated till convergence:
\begin{enumerate}
\item Replace the missing entries in $X$  with the corresponding entries from the current estimate $\widehat M$:
  \begin{equation}
    \label{eq:xhatstep}
    \widehat X \leftarrow P_\Omega(X) +P_\Omega^\perp(\widehat M);
  \end{equation}
\item Update $\widehat M$ by computing the soft-thresholded SVD of $\widehat X$:
  \begin{eqnarray}
    \widehat X&=&UDV^T    \label{eq:softSVD1}\\
    \widehat M&\leftarrow&U\Soft{\lambda}(D)V^T, \label{eq:softSVD2}
  \end{eqnarray}
  where the soft-thresholding operator $\Soft{\lambda}$ operates
  element-wise on the diagonal matrix $D$, and replaces $D_{ii}$ with
  $(D_{ii}-\lambda)_+$. With large $\lambda$ many of the diagonal
  elements will be set to zero, leading to a low-rank solution for
  (\ref{eq:1}).  
\end{enumerate}

For large matrices, step (\ref{eq:softSVD1}) could be
  a problematic bottleneck, since we need to compute the SVD of the filled
  matrix $\widehat X$.  In fact, for the Netflix problem $(m,n)\approx
  (400K,20K)$, which requires storage of $8 \times 10^9$ floating-point numbers
  (32Gb in single precision), which in itself could pose a
  problem. However, since only about 1\% of the entries are observed (for the Netflix dataset),
  sparse-matrix representations can be used.

\citet{mazumder09:_spect_regul_algor_for_learn} use two tricks to avoid these computational nightmares:
\begin{enumerate}
\item Anticipating a low-rank solution, they compute a reduced-rank SVD in step~(\ref{eq:softSVD1}); if the smallest of the computed singular values is less than $\lambda$, then this gives the desired solution. A reduced-rank SVD can be computed with alternating subspace methods, which can exploit warms start (which would be available here).
\item They rewrite $\widehat X$ in (\ref{eq:xhatstep}) as
  \begin{equation}
    \label{eq:splr}
    \widehat X = \left[P_\Omega(X)-P_\Omega(\widehat M)\right] +\widehat M;
  \end{equation}
The first piece is as sparse as $X$, and hence inexpensive to store and compute. The second piece is low rank, and also inexpensive to store. Furthermore, the alternating subspace methods mentioned in step (1) require left and right multiplications of $\widehat X$ by \emph{skinny} matrices, which can exploit this special structure.
\end{enumerate}

This \texttt{softImpute} algorithm works very well, and although an SVD needs to be computed each time step~(\ref{eq:softSVD1}) is  evaluated,
this step can use the previous solution as a warm start. As one gets closer to the solution, the warm starts tend to be better, and so the final iterations tend to be faster.

\citet{mazumder09:_spect_regul_algor_for_learn} also considered a path of such solutions, with decreasing values of $\lambda$. As $\lambda$ decreases, the rank of the solutions tend to increase, and at each $\lambda_\ell$, the iterative algorithms can use the solution  ${\widehat X}_{\lambda_{\ell-1}}$ (with $\lambda_{\ell-1}>\lambda_{\ell} $) as  warm starts, padded with some additional dimensions.

\citet{natis-05} consider a different approach. They impose a rank constraint, and consider the problem
\begin{equation}
  \label{eq:mmmf}
  \mini_{A,B} F(A,B):=\frac{1}{2}\|P_\Omega(X-AB^T)\|_F^2 +\frac{\lambda}{2}(\|A\|_F^2+\|B\|_F^2),
\end{equation}
where $A$ is $m\times r$ and $B$ is $n\times r$. This so-called
maximum-margin matrix factorization (\texttt{MMMF}) criterion is not convex in
$A$ and $B$, but it is bi-convex. They and others use alternating minimization algorithms (\texttt{ALS}) to solve (\ref{eq:mmmf}). Consider $A$ fixed, and we wish to solve (\ref{eq:mmmf}) for $B$. It is easy to see that this problem decouples into $n$ separate ridge regressions, with each column $X_j$ of $X$ as a response, and the $r$-columns of $A$ as
predictors. Since some of the elements of $X_j$ are missing,
and hence ignored, the corresponding rows of $A$ are deleted for the $j$th
regression. So these are really \emph{separate} ridge regressions, in
that the regression matrices are all different (even though they all
derive from $A$). By symmetry, with $B$ fixed, solving for $A$ amounts to $m$ separate ridge regressions.

There is a remarkable fact that ties the solutions to  (\ref{eq:mmmf}) and (\ref{eq:1}) \citep[for example]{mazumder09:_spect_regul_algor_for_learn}. If the solution to 
(\ref{eq:1}) has rank $q\leq r$, then it provides a solution to (\ref{eq:mmmf}). That solution is 
\begin{equation}
  \begin{array}{rcl}
\label{eq:equiv}
 \widehat A&=&U_{r}\Soft{\lambda}(D_{r})^{\half}\\
 \widehat B&=&V_{r}\Soft{\lambda}(D_{r})^{\half},
\end{array}
\end{equation}
where $U_r$, for example, represents the sub-matrix formed by the first $r$ columns of $U$, and likewise $D_r$ is the top $r\times r$
diagonal block of $D$. Note that for any solution to (\ref{eq:mmmf}), multiplying $\widehat A$ and $\widehat B$ on the right by an orthonormal $r\times r$ matrix $R$ would be an equivalent solution. Likewise, any solution to (\ref{eq:mmmf}) with rank $r\geq q$ gives a solution to (\ref{eq:1}).

In this paper we propose a new algorithm that profitably draws on ideas used both in \texttt{softImpute} and \texttt{MMMF}.
Consider the two steps (\ref{eq:softSVD1}) and (\ref{eq:softSVD2}). We can alternatively solve the optimization problem
\begin{equation}
  \label{eq:als1}
  \mini_{A,B} \frac{1}{2}\|\widehat X-AB^T\|_F^2+\frac{\lambda}{2}(\|A\|_F^2+\|B\|_B^2),
\end{equation}
and as long as we use enough columns on $A$ and $B$, we will have $\widehat M=\widehat A{\widehat B}^T$.
There are several important advantages to this approach:
\begin{enumerate}
\item Since $\widehat X$ is fully observed, the (ridge) regression operator is the same for each column, and so is computed just once. This reduces the computation of an update of $A$ or $B$ over \texttt{ALS} by a factor of $r$.
\item By orthogonalizing the $r$-column $A$ or $B$ at each iteration, the regressions are simply matrix multiplies, very similar to those used in the alternating subspace algorithms for computing the SVD. 
\item The ridging amounts to shrinking the higher-order components more than the lower-order components, and this tends to offer a convergence  advantage over the  previous approach (compute the SVD, then soft-threshold).
\item Just like before, these operations can make use of the {\em sparse plus low-rank} property of $\widehat X$.
\end{enumerate}

As an important additional modification,  we replace $\widehat X$ at each step using the most recently computed $\widehat A$ or $\widehat B$.
All combined, this hybrid algorithm tends to be faster than either approach on their own; see the simulation results in Section~\ref{sec:timing}

For the remainder of the paper, we present this \texttt{softImpute-ALS} algorithm in more detail, and show that it convergences to the solution to (\ref{eq:1}) for $r$ sufficiently large. We demonstrate its superior performance on simulated and real examples, including the Netflix data.
We briefly highlight two publicly available software implementations, and describe a simple approach to centering and scaling of both the rows and columns of the (incomplete) matrix.

\section{Rank-restricted Soft SVD}
\label{sec:rrssvd}
In this section we consider a complete matrix $X$, and develop a new algorithm for finding a rank-restricted SVD. In the next section we will adapt this approach to the matrix-completion problem.
We first give two theorems that are likely known to experts; the proofs are very short, so we provide them here for convenience.


\begin{thm}
\label{thm1}
  Let $X_{m\times n}$ be a matrix (fully observed), and let $0<r\leq \min(m,n)$. Consider the optimization problem
  \begin{equation}
    \label{eq:opt1}
    \mini_{Z:\; \mbox{\rm rank}(Z)\leq r}F_\lambda(Z):=\half||X-Z||_F^2 +\lambda||Z||_*.
  \end{equation}
A solution is given by 
\begin{eqnarray}
  \label{eq:soln}
\widehat Z&=& U_r \Soft{\lambda}(D_r) V_r^T,
\end{eqnarray}
where the rank-$r$ SVD of $X$ is $U_rD_rV_r^T$ and $ \Soft{\lambda}(D_r) = \diag[(\sigma_1-\lambda)_+,\ldots,(\sigma_r-\lambda)_+]$.
\end{thm}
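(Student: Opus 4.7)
My plan is to lower bound $F_\lambda(Z)$ by a separable function of the singular values of $Z$ using the von Neumann trace inequality, minimize that bound term by term subject to the rank-$r$ constraint, and then verify that the proposed $\widehat Z$ attains the bound.

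First I would expand
\begin{equation*}
F_\lambda(Z)=\tfrac12\|X\|_F^2-\langle X,Z\rangle+\tfrac12\|Z\|_F^2+\lambda\|Z\|_*,
\end{equation*}
and note that $\|Z\|_F^2=\sum_i\sigma_i(Z)^2$ and $\|Z\|_*=\sum_i\sigma_i(Z)$ depend on $Z$ only through its singular values. Let $\sigma_1\geq\cdots\geq\sigma_{\min(m,n)}\geq 0$ denote the singular values of $X$ and $s_i:=\sigma_i(Z)$. By von Neumann's trace inequality, $\langle X,Z\rangle\leq\sum_i \sigma_i\,s_i$, with equality when $X$ and $Z$ admit a simultaneous SVD (i.e.\ share left and right singular vectors with matched ordering). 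Because $\rank(Z)\leq r$, at most $s_1,\ldots,s_r$ are nonzero, so combining these observations gives
\begin{equation*}
F_\lambda(Z)\;\geq\;\tfrac12\|X\|_F^2+\sum_{i=1}^{r}\!\left[\tfrac12 s_i^2-(\sigma_i-\lambda)s_i\right],\qquad s_i\geq 0.
\end{equation*}

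Next I minimize each term in the sum separately over $s_i\geq 0$. The one-dimensional quadratic $\tfrac12 s^2-(\sigma_i-\lambda)s$ is minimized at $s=\sigma_i-\lambda$ when $\sigma_i>\lambda$ and at $s=0$ otherwise, so the optimum is $s_i^\star=(\sigma_i-\lambda)_+$, matching the diagonal of $\Soft{\lambda}(D_r)$. This yields a valid lower bound on $F_\lambda(Z)$ over all rank-$\leq r$ matrices.

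Finally I verify that $\widehat Z=U_r\Soft{\lambda}(D_r)V_r^T$ attains this bound. By construction its singular values are exactly $s_i^\star=(\sigma_i-\lambda)_+$, so the $\|Z\|_F^2$ and $\|Z\|_*$ terms match, and since $\widehat Z$ is expressed in the singular basis of $X$, $\langle X,\widehat Z\rangle=\sum_{i=1}^r\sigma_i(\sigma_i-\lambda)_+$, achieving equality in the von Neumann bound. Hence $F_\lambda(\widehat Z)$ equals the lower bound, and $\widehat Z$ is a minimizer. The main technical point is the von Neumann trace inequality itself, which I would quote as a standard fact; everything else reduces to a one-variable quadratic minimization. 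Note that the formula remains valid when $r>\rank(X)$ or when some $\sigma_i\leq\lambda$, in which case the extra columns of $\widehat Z$ are simply zero.
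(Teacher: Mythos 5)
Your proof is correct and follows essentially the same route as the paper's: both rest on the von Neumann trace inequality to reduce the matrix problem to a separable problem in the singular values of $Z$, and both conclude by exhibiting $\widehat Z = U_r\Soft{\lambda}(D_r)V_r^T$ as attaining the resulting lower bound. If anything, you are slightly more explicit than the paper in solving the one-dimensional quadratics $\tfrac12 s^2-(\sigma_i-\lambda)s$ to obtain the soft-thresholded values, a step the paper leaves implicit.
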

\begin{proof}
We will show that, for any $Z$ the following inequality holds:
\begin{equation}\label{thm1-ineq-1}
 F_\lambda(Z) \geq f_\lambda(\B\sigma(Z)):= \half||\B\sigma(X)-\B\sigma(Z) ||_2^2 +\lambda \sum_{i} \sigma_{i}(Z),
 \end{equation}
where, $f_\lambda(\B\sigma(Z))$ is a function of the singular  values of  $Z$
and $\B\sigma(X)$ denotes the vector of singular values of $X$, such that $\sigma_{i}(X) \geq \sigma_{i+1}(X)$ for all $i=1, \ldots, \min \{ m,n \}$.

To show inequality~\eqref{thm1-ineq-1} it suffices to show that:
$$ \half||X-Z||_F^2   \geq \half||\B\sigma(X)-\B\sigma(Z) ||_2^2, $$
which follows as an immediate consequence of the by the well known Von-Neumann or Wielandt-Hoffman trace inequality~(\cite{stewart-sun:1990}):
$$\langle X,  Z \rangle \leq \sum_{i=1}^{\min\{m,n\}} \sigma_{i}(X) \sigma_i (Y).$$
 Observing that
 $$ \mbox{\rm rank}(Z)\leq r  \iff  \|\B\sigma(Z) \|_0 \leq r,$$
  we have established:
  \begin{equation}\label{eq:vec-mat-ineq1}
\begin{aligned}
   \inf_{Z:\; \mbox{\rm rank}(Z)\leq r}& \;\;  \left  (  \half||X-Z||_F^2 +\lambda||Z||_*\right )    \\ 
 \geq    \inf_{\B\sigma(Z) :  \|\B\sigma(Z) \|_0 \leq r}&  \left  ( \half||\B{\sigma}(X)-\B\sigma(Z) ||_2^2 +\lambda \sum_{i} \sigma_{i}(Z) \right )
\end{aligned}
  \end{equation}
  Observe that the optimization problem in the right hand side of~\eqref{eq:vec-mat-ineq1} is a separable vector optimization problem.
  We claim that the optimum solutions of the two problems appearing in~\eqref{eq:vec-mat-ineq1} are in fact equal. 
 To see this, let 
$$ \widehat{\B\sigma(Z)}  = \argmin_{\B\sigma(Z) :  \|\B\sigma(Z) \|_0 \leq r}\;\;  \left  ( \half||\B{\sigma}(X)-\B\sigma(Z) ||_2^2 +\lambda \sum_{i} \sigma_{i}(Z) \right ).$$
If the SVD of $X$ is given by $UDV^T$, then the choice $\widehat{Z}=U \mathrm{diag}(\widehat{\B\sigma(Z)}) V'$ satisfies 
$$\rnk(\widehat{Z}) \leq r \;\; \text{and} \;\;\; F_\lambda(\widehat{Z}) = f_\lambda(\widehat{\B\sigma(Z)})$$
This shows that:
  \begin{equation}\label{eq:vec-mat-eq1}
\begin{aligned}
   \inf_{Z:\; \mbox{\rm rank}(Z)\leq r} & \left  (  \half||X-Z||_F^2 +\lambda||Z||_*\right ) \\
  = \inf_{\B\sigma(Z) :  \|\B\sigma(Z) \|_0 \leq r} &  \left  ( \half||\B{\sigma}(X)-\B\sigma(Z) ||_2^2 +\lambda \sum_{i} \sigma_{i}(Z) \right )
\end{aligned}
  \end{equation}
and thus concludes the proof of the theorem.

\end{proof}
This generalizes a similar result where there is no rank restriction, and the problem is convex in $Z$.
For $r<\min(m,n)$,  \eqref{eq:opt1} is not convex in $Z$, but the solution can be characterized in terms of the SVD of $X$.

The second theorem relates this problem to the corresponding matrix factorization problem

\begin{thm}
\label{thm2}
  Let $X_{m\times n}$ be a matrix (fully observed), and let $0<r\leq \min(m,n)$. Consider the optimization problem
  \begin{equation}
    \label{eq:opt2}
  \min_{A_{m\times r},\;B_{n\times r}} \frac{1}{2}\norm{X-AB^T}_F^2 +\frac{\lambda}{2}\left(\norm{A}_F^2+\norm{B}_F^2\right)
  \end{equation}
A solution is given by $\widehat A=U_r \Soft{\lambda}(D_r)^{\frac12}$ and $\widehat B=V_r \Soft{\lambda}(D_r)^{\frac12}$, and all solutions satisfy 
$\widehat A\widehat B^T=\widehat Z$, where, $\widehat Z$ is as given in~\eqref{eq:soln}.
\end{thm}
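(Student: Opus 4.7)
The plan is to reduce problem~\eqref{eq:opt2} to problem~\eqref{eq:opt1} via the variational identity
\[
\|M\|_* \;=\; \min_{A,B:\,AB^T=M}\tfrac12\bigl(\|A\|_F^2+\|B\|_F^2\bigr),
\]
valid for $A\in\mathbb{R}^{m\times r},B\in\mathbb{R}^{n\times r}$ whenever $r\ge\rnk(M)$. Combined with Theorem~\ref{thm1}, this lets us lower-bound the objective in~\eqref{eq:opt2} and then exhibit a factorization that attains the bound.

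First I would establish the inequality direction $\tfrac12(\|A\|_F^2+\|B\|_F^2)\ge\|AB^T\|_*$ via AM-GM together with the standard singular-value inequality $\|AB^T\|_*\le\|A\|_F\|B\|_F$; the latter follows from $\sigma_i(AB^T)\le\sigma_i(A)\sigma_i(B)$ and Cauchy--Schwarz. Given this, for any feasible $(A,B)$ set $Z=AB^T$, which has rank at most $r$; then
\[
\tfrac12\|X-AB^T\|_F^2+\tfrac{\lambda}{2}\bigl(\|A\|_F^2+\|B\|_F^2\bigr)\;\ge\;\tfrac12\|X-Z\|_F^2+\lambda\|Z\|_*\;=\;F_\lambda(Z)\;\ge\;F_\lambda(\widehat Z),
\]
where the last step invokes Theorem~\ref{thm1}. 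This bounds the value of~\eqref{eq:opt2} from below by the value of~\eqref{eq:opt1}.

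Next I would verify that the proposed $\widehat A=U_r\Soft{\lambda}(D_r)^{1/2}$, $\widehat B=V_r\Soft{\lambda}(D_r)^{1/2}$ attain this lower bound. Using $U_r^TU_r=V_r^TV_r=I_r$, one checks $\widehat A\widehat B^T=U_r\Soft{\lambda}(D_r)V_r^T=\widehat Z$, and $\tfrac12(\|\widehat A\|_F^2+\|\widehat B\|_F^2)=\trace\Soft{\lambda}(D_r)=\|\widehat Z\|_*$, so the objective equals $F_\lambda(\widehat Z)$. Hence the infima in~\eqref{eq:opt1} and~\eqref{eq:opt2} coincide and $(\widehat A,\widehat B)$ is optimal.

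For the ``all solutions satisfy $\widehat A\widehat B^T=\widehat Z$'' claim, I would argue that every minimizer $(A^*,B^*)$ of~\eqref{eq:opt2} must attain equality in both inequalities of the chain above. Equality in the second forces $Z^*:=A^*(B^*)^T$ to be a minimizer of $F_\lambda$ over rank-$r$ matrices, hence (by the proof of Theorem~\ref{thm1}) $Z^*=\widehat Z$, and equality in the first forces $\tfrac12(\|A^*\|_F^2+\|B^*\|_F^2)=\|Z^*\|_*$, which is consistent with the rotational ambiguity $(A^*,B^*)\mapsto(A^*R,B^*R)$ for orthogonal $R$. The only real obstacle is the singular-value factorization inequality $\|AB^T\|_*\le\|A\|_F\|B\|_F$; the rest is bookkeeping once Theorem~\ref{thm1} is in hand.
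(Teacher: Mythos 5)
Your overall strategy is exactly the paper's: both proofs pass from~\eqref{eq:opt2} to the rank-constrained problem~\eqref{eq:opt1} via the variational identity $\|Z\|_*=\min_{AB^T=Z}\tfrac12\left(\|A\|_F^2+\|B\|_F^2\right)$ (the paper's Lemma~\ref{lem:lem1}, which it quotes from the literature without proof) and then invoke Theorem~\ref{thm1}. Your explicit check that $(\widehat A,\widehat B)$ attains the lower bound, and your equality-tracing argument for the ``all solutions'' claim, are a more detailed version of the paper's one-line ``The conclusions follow from Theorem~\ref{thm1}.''

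The one genuine flaw is in your proposed justification of the key inequality $\|AB^T\|_*\le\|A\|_F\|B\|_F$: the pointwise bound $\sigma_i(AB^T)\le\sigma_i(A)\sigma_i(B)$ is \emph{false} in general. For example, with $M>1$ take
\begin{equation*}
A=\begin{pmatrix}M&0\\0&1\end{pmatrix},\qquad B^T=\begin{pmatrix}0&1\\M&0\end{pmatrix},
\end{equation*}
so that $\sigma(A)=\sigma(B)=(M,1)$ while $AB^T=\begin{pmatrix}0&M\\M&0\end{pmatrix}$ has both singular values equal to $M$; hence $\sigma_2(AB^T)=M>1=\sigma_2(A)\sigma_2(B)$. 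What is true is the weak-majorization statement $\sum_{i\le k}\sigma_i(AB^T)\le\sum_{i\le k}\sigma_i(A)\sigma_i(B)$ for every $k$ (a consequence of Horn's log-majorization), which does give what you need, but that is a substantially deeper fact than the one you cite. A simpler repair: write the SVD $AB^T=UDV^T$ and compute $\|AB^T\|_*=\trace(D)=\trace(U^TAB^TV)=\langle A^TU,B^TV\rangle\le\|A^TU\|_F\,\|B^TV\|_F\le\|A\|_F\,\|B\|_F$, using Cauchy--Schwarz for the trace inner product and the fact that $U$ and $V$ have orthonormal columns; AM--GM then yields $\|AB^T\|_*\le\tfrac12\left(\|A\|_F^2+\|B\|_F^2\right)$. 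With that substitution your argument is complete. (One further caveat, which you share with the paper: the claim that \emph{all} solutions satisfy $A^*B^{*T}=\widehat Z$ implicitly uses uniqueness of the minimizer of~\eqref{eq:opt1}, which can fail when singular values of $X$ are tied; neither your argument nor the paper's addresses this degenerate case.)
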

We make use of the following lemma from \cite{natis-05,mazumder09:_spect_regul_algor_for_learn}, which we give without proof:
\begin{lem}
  \label{lem:lem1}
\[\norm{Z}_*=\min_{A,B: Z=AB^T} \frac{1}{2}\left( \norm{A}_F ^2 +\norm{B}_F ^2 \right)\]
\end{lem}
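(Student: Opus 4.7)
The plan is to prove the identity by establishing the two inequalities separately: an upper bound witnessed by an explicit factorization coming from the SVD, and a lower bound that holds for every factorization.

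First, I would exhibit a factorization that achieves $\|Z\|_*$. Let $Z=UDV^T$ be a (thin) SVD of $Z$, where $D=\diag(\sigma_1,\ldots,\sigma_q)$ with $q=\rnk(Z)$, and set
\[
\widehat A = UD^{1/2}, \qquad \widehat B = VD^{1/2}.
\]
Then $\widehat A\widehat B^T = UD V^T = Z$, and because $U,V$ have orthonormal columns,
\[
\|\widehat A\|_F^2 = \trace(D) = \sum_i \sigma_i = \|Z\|_*, \qquad \|\widehat B\|_F^2 = \|Z\|_*.
\]
Hence $\tfrac12(\|\widehat A\|_F^2+\|\widehat B\|_F^2)=\|Z\|_*$, which shows that the minimum on the right is at most $\|Z\|_*$ and is attained.

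Next, I would show that \emph{every} factorization $Z=AB^T$ satisfies $\|Z\|_*\le \tfrac12(\|A\|_F^2+\|B\|_F^2)$. The key inequality is $\|AB^T\|_* \le \|A\|_F\,\|B\|_F$. I would derive this from the variational characterization of the nuclear norm as the dual of the operator norm: for any $M$ with $\|M\|_{\mathrm{op}}\le 1$,
\[
\langle AB^T, M\rangle = \langle A, MB\rangle \le \|A\|_F\|MB\|_F \le \|A\|_F\|M\|_{\mathrm{op}}\|B\|_F \le \|A\|_F\|B\|_F,
\]
by Cauchy--Schwarz and the submultiplicativity $\|MB\|_F\le \|M\|_{\mathrm{op}}\|B\|_F$. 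Taking the supremum over admissible $M$ yields $\|AB^T\|_*\le \|A\|_F\|B\|_F$. Combined with the AM--GM inequality $\|A\|_F\|B\|_F \le \tfrac12(\|A\|_F^2+\|B\|_F^2)$, this gives the desired lower bound.

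The two inequalities together prove the identity, and the step above also identifies when equality holds: AM--GM forces $\|A\|_F=\|B\|_F$, and matching the dual bound forces the singular structure of $A,B$ to line up with that of $Z$, recovering the factorization $A=UD^{1/2}R$, $B=VD^{1/2}R$ for an orthogonal $R$. The only non-routine step is the dual-norm inequality $\|AB^T\|_*\le\|A\|_F\|B\|_F$; the rest is direct computation on the SVD and an application of AM--GM.
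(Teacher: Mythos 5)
Your proof is correct. There is nothing in the paper itself to compare it against: the paper states Lemma~\ref{lem:lem1} and explicitly gives it ``without proof,'' citing \citet{natis-05} and \citet{mazumder09:_spect_regul_algor_for_learn}; your argument supplies the standard proof used in that literature. Both halves are sound: (i) achievability, via the thin SVD $Z=UDV^T$ and the factorization $\widehat A=UD^{1/2}$, $\widehat B=VD^{1/2}$, for which $\tfrac12\left(\norm{\widehat A}_F^2+\norm{\widehat B}_F^2\right)=\trace(D)=\norm{Z}_*$; and (ii) the lower bound valid for \emph{every} factorization, $\norm{AB^T}_*\le\norm{A}_F\norm{B}_F\le\tfrac12\left(\norm{A}_F^2+\norm{B}_F^2\right)$, where the first inequality follows from the dual-norm characterization $\norm{N}_*=\sup_{\norm{M}_{\mathrm{op}}\le 1}\langle N,M\rangle$ together with Cauchy--Schwarz and $\norm{MB}_F\le\norm{M}_{\mathrm{op}}\norm{B}_F$, and the second is AM--GM; invoking the dual-norm fact without proof is reasonable. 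One minor caveat, inessential to the lemma: your concluding characterization of the equality case is too restrictive, since equality is also attained by $A=UD^{1/2}R$, $B=VD^{1/2}R$ where $R$ has orthonormal rows but is not square (for instance, padding $\widehat A,\widehat B$ with zero columns); the lemma asserts only the value of the minimum, so this side remark does not affect the validity of your proof.
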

\begin{proof}
Using Lemma~\ref{lem:lem1}, we have that
\begin{align*}
  &\min_{A_{m\times r},\;B_{n\times r}} \frac{1}{2}\norm{X-AB^T}_F^2 +\frac{\lambda}{2}\norm{A}_F^2+\frac{\lambda}{2} \norm{B}_F^2\\
  &=\min_{Z: \rank(Z)\le r}  \frac{1}{2}\norm{X-Z}_F^2 +\lambda \norm{Z}_*
\end{align*}
The conclusions follow from Theorem \ref{thm1}.
\end{proof}
Note, in both theorems the solution might have rank less than $r$.

Inspired by the alternating subspace iteration algorithm~\cite{golub2012matrix} for the reduced-rank SVD, we now present Algorithm~\ref{alg:rrssvd}, an alternating ridge-regression algorithm for finding the solution to (\ref{eq:opt1}).
\begin{algorithm}[p]
  \caption{Rank-Restricted Soft SVD}
\label{alg:rrssvd}
\begin{enumerate}
\item Initialize $A=UD$ where $U_{m\times r}$ is a randomly chosen matrix with orthonormal columns and $D=I_r$, the $r\times r$ identity matrix.
\item Given $A$, solve for $B$:
  \begin{equation}
    \label{eq:BgivenA}
    \minimize_B||X-AB^T||_F^T+\lambda||B||_F^2.
  \end{equation}
This is a multiresponse ridge regression, with solution
\begin{equation}
  \label{eq:Bsol}
  \tilde{B}^T=(D^2+\lambda I)^{-1}DU^TX.
\end{equation}
This is simply matrix multiplication followed by coordinate-wise shrinkage.
\item Compute the SVD of $\tilde{B}D = \tilde{V}\tilde{D}^2R^T$, and let $V\leftarrow \tilde{V}$, $D\leftarrow \tilde{D}$, and $B=VD$.
\item Given $B$, solve for $A$:
  \begin{equation}
    \label{eq:BgivenA}
    \minimize_A||X-AB^T||_F^T+\lambda||A||_F^2.
  \end{equation}
This is also a multiresponse ridge regression, with solution
\begin{equation}
  \label{eq:Bsol}
  \tilde{A}=XVD(D^2+\lambda I)^{-1}.
\end{equation}
Again  matrix multiplication followed by coordinate-wise shrinkage.
\item Compute the SVD of $\tilde{A}D = \tilde{U}\tilde{D}^2R^T$, and let $U\leftarrow \tilde{U}$, $D\leftarrow \tilde{D}$, and $A=UD$.
\item Repeat steps (2)--(5) until convergence of $AB^T$.
\item Compute $M=XV$, and then it's SVD: $M=U D_\sigma R^T$. Then output $U$, $V\leftarrow VR$ and $\Soft{\lambda}(D_{\sigma})=\mbox{diag}[(\sigma_1-\lambda)_+,\ldots,(\sigma_r-\lambda)_+]$.
\end{enumerate}
\end{algorithm}
\paragraph{Remarks}

\begin{enumerate}
\item At each step the algorithm keeps the current solution in ``SVD'' form, by representing $A$ and $B$ in terms of orthogonal matrices. The computational effort needed to do this is exactly that required to perform each ridge regression, and once done makes the subsequent ridge regression trivial. 
\item The proof of convergence of this algorithm is essentially the same as that for an alternating subspace algorithm~\cite{golub2012matrix} (without shrinkage). 


\item In principle step (7) is not necessary,  but in practice it cleans up the rank nicely.
\item This algorithm lends itself naturally to distributed computing for very large matrices $X$; $X$ can be chunked into smaller blocks, and the left and right matrix multiplies can be chunked accordingly. See Section~\ref{sec:distr-impl}.
\item There are many ways to check for convergence. Suppose we have a pair of iterates $(U,D^2,V)$ (old) and $(\tU,\tD^2,\tV)$ (new), then  the relative change in Frobenius norm is given by
  \begin{eqnarray}
  \nabla F&=&\frac{||UD^2V^T-\tU\tD^2\tV^T||_F^2}{||UD^2V^T||_F^2}\nonumber\\
&=&\frac{\trace(D^4)+\trace(\tD^4)-2\trace(D^2U^T\tU\tD^2\tV^TV)}{\trace(D^4)},   \label{eq:conv}
  \end{eqnarray}
which is not expensive to compute.
\item If $X$ is sparse, then the left and right matrix multiplies can be achieved efficiently by using sparse matrix methods.
\item Likewise, if $X$ is sparse, but has been column and/or row centered (see Section~\ref{centerscale}), it can be represented in ``sparse plus low rank'' form; once again left and right multiplication of such matrices can be done efficiently.
\end{enumerate}

An interesting feature of this algorithm is that a reduced rank SVD of $X$ is available from the solution, with the rank determined by the particular value of $\lambda$ used. The singular values would have to be corrected by adding $\lambda$ to each. There is empirical evidence that this is faster than without shrinkage, with accuracy biased more toward the larger singular values.


\section{softImpute-ALS Algorithm}
\label{eMMMF}
Now we return to the case where $X$ has missing values, and the non-missing entries are indexed by the set $\Omega$. We present algorithm~\ref{alg:rrsimpute} (\texttt{softImpute-ALS}) for solving (\ref{eq:mmmf}):
$$  \mini_{A,B}\|P_\Omega(X-AB^T)\|_F^2 +\lambda(\|A\|_F^2+\|B\|_F^2).
$$
where $A_{m\times r}$ and $B_{n\times r}$ are each of rank at most $r\leq\min(m,n)$. 
\begin{algorithm}
  \caption{Rank-Restricted Efficient Maximum-Margin Matrix Factorization: \texttt{softImpute-ALS} }
\label{alg:rrsimpute}
\begin{enumerate}
\item Initialize $A=UD$ where $U_{m\times r}$ is a randomly chosen matrix with orthonormal columns and $D=I_r$, the $r\times r$ identity matrix, and $B=VD$ with $V=0$. Alternatively, any prior solution  $A=UD$ and $B=VD$ could be used as a warm start.
\item Given $A=UD$ and $B=VD$, approximately solve
  \begin{equation}
    \label{eq:BgivenA2}
    \minimize_{\tB}  \frac{1}{2} \|P_\Omega(X-A\tB^T)\|_F^T+\frac{\lambda}{2}\|\tB\|_F^2
  \end{equation}
to update $B$.
We achieve that with the following steps:
  \begin{enumerate}
  \item Let $X^*= \left(P_\Omega(X)-P_\Omega(AB^T)\right)+AB^T$, stored as \emph{sparse plus low-rank}.
  \item Solve \begin{equation}
  \label{eq:oneside}
  \minimize_{\tB} \frac{1}{2}\|X^*-A\tB^T\|_F^2 +\frac{\lambda}{2} \|\tB\|_F^2,
\end{equation}
with solution
\begin{eqnarray}
  \tB^T &=& (D^2+\lambda I)^{-1}DU^TX^*\nonumber\\
&=&(D^2+\lambda I)^{-1}DU^T \left(P_\Omega(X)-P_\Omega(AB^T)\right)  \label{eq:solq2}\\
&&\quad + (D^2+\lambda I)^{-1}D^2 B^T.\label{eq:second}
\end{eqnarray}
\item Use this solution for $\tB$ and update $V$ and $D$:
  \begin{enumerate}
  \item Compute $\tilde B D=\tU \tD ^2\tV^T$;
  \item $V\leftarrow \tU$, and $D\leftarrow \tD$.
  \end{enumerate}
  \end{enumerate}

\item Given $B=VD$, solve for $A$. By symmetry, this is equivalent to step~2, with $X^T$ replacing $X$, and $B$ and $A$ interchanged.

\item Repeat steps (2)--(3) until convergence.
\item Compute $M=X^*V$, and then it's SVD: $M=U D_\sigma R^T$. Then output $U$, $V\leftarrow VR$ and $D_{\sigma,\lambda}=\mbox{diag}[(\sigma_1-\lambda)_+,\ldots,(\sigma_r-\lambda)_+]$.
\end{enumerate}
\end{algorithm}

The algorithm exploits the decomposition
\begin{equation}
  \label{eq:splr}
  P_\Omega(X-AB^T)=P_\Omega(X)+P_\Omega^\perp(AB^T)-AB^T.
\end{equation}
Suppose we have current estimates for $A$ and $B$, and we wish to compute the new $\widetilde B$.
We will replace the first occurrence of $AB^T$ in the right-hand side of  (\ref{eq:splr}) with the current estimates, leading to a {\em filled in} $X^*=P_\Omega(X)+P_\Omega^\perp(AB^T)$, and then solve for $\widetilde B$ in 
$$  \mini_{\widetilde B}\|X^*-A\widetilde B\|_F^2 +\lambda\|\widetilde B\|_F^2.
$$

  Using the same notation, we can write (importantly) 
\begin{equation}
  \label{eq:splr2}
  X^*=P_\Omega(X)+P_\Omega^\perp(AB^T)= \left(P_\Omega(X)-P_\Omega(AB^T)\right)+AB^T;
\end{equation}
This is the efficient {\em sparse + low-rank} representation for high-dimensional problems; efficient to store and also efficient for left and right multiplication.

\paragraph{Remarks}

\begin{enumerate}
\item This algorithm is a slight modification of Algorithm~\ref{alg:rrssvd}, where in step 2(a) we use the latest imputed matrix $X^*$ rather than $X$.
\item The computations in step 2(b) are particularly efficient. In (\ref{eq:solq2}) we use the current version of $A$ and $B$ to predict at the observed entries $\Omega$, and then perform a multiplication of a sparse matrix on the left by a skinny matrix, followed by rescaling of the rows. In (\ref{eq:second}) we simply rescale the rows of the previous version for $B^T$.
\item\label{item:3} After each update, we maintain the integrity of the current solution. By Lemma~\ref{lem:lem1} we know that the solution to
\begin{equation}
  \label{eq:minnorm}
  \minimize_{A,\;B\;:AB^T=\tilde A\tilde B^T}(\|A\|_F^2+\|B\|_F^2)
\end{equation}
is given by the SVD of $\tilde A\tilde B^T=UD^2V^T$, with $A=UD$ and $B=VD$. Our iterates maintain this each time $A$ or $B$ changes in step~2(c), with no additional significant computational cost.
\item The final step is as in Algorithm~\ref{alg:rrssvd}. We know the solution should have the form of a soft-thresholded SVD. The alternating ridge regression might not exactly reveal the rank of the solution. This final step tends to clean this up, by revealing exact zeros after the soft-thresholding.
\item In Section~\ref{sec:theoretical-results} we discuss (the lack of) optimality guarantees of fixed points of Algorithm~\ref{alg:rrsimpute} (in terms of criterion (\ref{eq:1}). We note that the output of \texttt{softImpute-ALS} can easily be piped into \texttt{softImpute} as a warm start. This typically exits almost immediately in our \texttt{R} package \texttt{softImpute}.
\end{enumerate}


\section{Theoretical Results}
\label{sec:theoretical-results}

 
 In this section we investigate the theoretical properties of the \texttt{softImpute-ALS} algorithm
 in the context of problems~\eqref{eq:mmmf} and~\eqref{eq:1}.
  
  We show that the~\texttt{softImpute-ALS} algorithm converges to a first order stationary point for 
  problem~\eqref{eq:mmmf} at a rate of $O(1/K),$ where $K$ denotes the number of iterations of the algorithm.
We also discuss the role played by $\lambda$ in the convergence rates.
We establish the limiting properties of the estimates produced by the \texttt{softImpute-ALS} algorithm: properties of the 
limit points of the sequence $(A_k,B_{k})$ in terms of problems~\eqref{eq:mmmf} and~\eqref{eq:1}. 
We show that for any $r$ in problem~\eqref{eq:mmmf}
the sequence produced by the 
\texttt{softImpute-ALS} algorithm leads to a decreasing sequence of objective values for the convex problem~\eqref{eq:1}.
A fixed point of the \texttt{softImpute-ALS} problem need not correspond to the minimum of the convex problem~\eqref{eq:1}.
We derive simple necessary and sufficient conditions that must be satisfied for a stationary point of the algorithm to be a 
minimum for the problem~\eqref{eq:1}---the conditions can be verified by a simple structured low-rank SVD computation.

\medskip
 
We begin the section with a formal description of the updates produced by the \texttt{softImpute-ALS} algorithm in terms of 
the original objective function~\eqref{eq:mmmf} and its majorizers~\eqref{defn-QA} and~\eqref{defn-QB}.
Theorem~\ref{thm:mono-props} establishes that the updates lead to a decreasing sequence of objective values $F(A_{k},B_k)$ in \eqref{eq:mmmf}.
Section~\ref{conv-rates-ch} (Theorem~\ref{thm:conv-rate-1} and Corollary~\ref{cor-1-conv-rate}) 
derives the finite-time convergence rate properties of the proposed algorithm~\texttt{softImpute-ALS}.
Section~\ref{sec:aympt-results-1} provides descriptions of the first order stationary conditions for problem~\eqref{eq:mmmf},
the fixed points of the algorithm~\texttt{softImpute-ALS} and 
the limiting behavior of the sequence $(A_{k},B_{k}), k \geq 1$ as $k \rightarrow \infty$.
Section~\ref{sec:connect-props} (Lemma~\ref{dec-seq-obj-als-1}) 
investigates the implications of the updates produced by~\texttt{softImpute-ALS} for 
problem~\eqref{eq:mmmf} in terms of the problem~\eqref{eq:1}.
Section~\ref{sec:closer-look} (Theorem~\ref{equivalence-1}) studies the stationarity conditions for problem~\eqref{eq:mmmf} vis-a-vis 
the optimality conditions for the convex problem~\eqref{eq:1}.

 The \texttt{softImpute-ALS} algorithm may be thought of as an EM- or more generally a MM-style algorithm (majorization minimization),  where every imputation step leads to an upper bound to the training error part of the loss function. 
 The resultant loss function is minimized wrt $A$---this leads to a partial minimization of the objective function wrt $A$. 
The process is repeated with the other factor $B$, and continued till convergence.

Recall the objective function of MMMF in \eqref{eq:mmmf}:
$$
F(A,B):= \frac{1}{2}\norm{P_\Omega (X-AB^T)}_F^2 +\frac{\lambda}{2}\norm{A}_F^2+\frac{\lambda}{2}\norm{B}_F^2.
$$
We define the surrogate functions
\begin{eqnarray} 
Q_A(Z_1|A,B) &:=& \frac{1}{2}\norm{P_\Omega (X-Z_1B^T)+P_\Omega^\perp (AB^T-Z_1B^T)}_F^2 \label{defn-QA}\\
&&\qquad +\frac{\lambda}{2}\norm{Z_1}_F^2+\frac{\lambda}{2}\norm{B}_F^2\\
Q_B(Z_2|A,B) &:=& \frac{1}{2}\norm{P_\Omega(X-AZ_2^T)+P_\Omega^\perp(AB^T-AZ_2^T)}_F^2\label{defn-QB}\\
&&\qquad+\frac{\lambda}{2}\norm{A}_F^2+\frac{\lambda}{2}\norm{Z_2}_F^2.
\end{eqnarray}



Consider the function $g(AB^T):= \frac{1}{2}\norm{P_\Omega (X-AB^T)}_F^2$ which is the training error as a 
function of the outer-product $Z=AB^T$, and observe that for  any $Z, \overline{Z}$ we have:
 \begin{equation}\label{major-1-1}
 \begin{aligned}
 g(Z) &\leq &  \frac{1}{2}\norm{P_\Omega (X- Z )+P_\Omega^\perp (\overline{Z} - Z)}_F^2 \\
  &=& \frac{1}{2}\norm{\left( P_\Omega (X) + P_{\Omega}^\perp(\overline{Z}) \right) - Z}_F^2
 \end{aligned}
 \end{equation}
where,  equality holds above at $Z= \overline{Z}$.
This leads to the following simple but important observations:
\begin{equation} \label{major-1-2}
Q_A(Z_1|A,B) \geq F(Z_{1}, B), \qquad Q_B(Z_2|A,B)\geq F(A, Z_{2}), 
\end{equation}
suggesting that $Q_A(Z_1|A,B)$ is a majorizer of $F(Z_{1}, B)$ (as a function of $Z_{1}$); similarly, 
$Q_B(Z_2|A,B)$ majorizes  $F(A, Z_{2})$. In addition, equality holds as follows:
\begin{equation} \label{major-1-2-eq}
Q_A(A |A,B)=F(A,B)=Q_B(B|A,B).
\end{equation}

We also define $X^*_{A,B}= P_{\Omega}(X)  +P_{\Omega}^\perp (AB^T)$.
Using these definitions, we can succinctly describe the \texttt{softImpute-ALS} algorithm in Algorithm~\ref{alg:fast-als}. This is almost equivalent to Algorithm~\ref{alg:rrsimpute}, but more convenient for theoretical analysis. It has the  orthogalization and redistribution of $\tilde D$ in step 3 removed, and step~5 removed.
\begin{algorithm}
\caption{\textbf{\texttt{softImpute-ALS}}}
\textbf{Inputs:}
Data matrix $X$, initial iterates $A_0$ and $B_0$, and $k=0$.\\
\textbf{Outputs:}
$(A^*, B^*) = \argmin_{A,B}\ F(A,B)$\\
\textbf{Repeat until Convergence}
\begin{enumerate}
\item $k\gets k+1$.
\item $X^* \gets P_{\Omega}(X)  +P_{\Omega}^\perp (AB^T)=P_\Omega ( X- AB^T) +AB^T$
\item  $A \gets X^* B(B^TB +\lambda I)^{-1}  = \argmin_{Z_1}\ Q_A (Z_1|A,B)$.
\item $X^* \gets P_{\Omega}(X)  +P_{\Omega}^\perp (AB^T)$
\item  $B \gets X^{*T} A (A^TA +\lambda I)^{-1}= \argmin_{Z_2}\ Q_B (Z_2| A,B)$.
\end{enumerate}
\label{alg:fast-als}
\end{algorithm}
Observe that the \texttt{softImpute-ALS} algorithm  can be described as the following iterative procedure:
\begin{eqnarray}
A_{k+1}  &\in& \argmin_{Z_1} \;\; Q_A(Z_1|A_{k},B_{k})  \label{update-A1}\\
B_{k+1}  &\in& \argmin_{Z_2} \;\; Q_B(Z_2|A_{k+1},B_{k}).  \label{update-B1}
\end{eqnarray}
We will use the above notation in our proof.

We can easily establish that \texttt{softImpute-ALS} is a descent method, or its iterates never increase the function value.
\begin{thm}\label{thm:mono-props}
Let $\{(A_k,B_k)\}$ be the iterates generated by {\em \texttt{softImpute-ALS}}. The function values are monotonically decreasing,
\begin{equation*}
F(A_k,B_k)\ge F(A_{k+1},B_k)\ge F(A_{k+1},B_{k+1}).
\end{equation*}
\label{thm:monotone}
\end{thm}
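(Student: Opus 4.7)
The plan is to apply the standard majorization--minimization (MM) sandwich argument, using precisely the three ingredients already supplied in the excerpt: (i) the majorization inequalities $Q_A(Z_1|A,B)\geq F(Z_1,B)$ and $Q_B(Z_2|A,B)\geq F(A,Z_2)$ from~\eqref{major-1-2}, (ii) the tightness identity $Q_A(A|A,B)=F(A,B)=Q_B(B|A,B)$ from~\eqref{major-1-2-eq}, and (iii) the defining optimality of the updates~\eqref{update-A1}--\eqref{update-B1}. Once these are lined up, both inequalities follow by a three-step chain; there is no real obstacle, only a bookkeeping issue of applying the right majorizer at the right iterate.

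For the first inequality $F(A_k,B_k)\geq F(A_{k+1},B_k)$, I would write
\begin{equation*}
F(A_k,B_k) \;=\; Q_A(A_k\mid A_k,B_k) \;\geq\; Q_A(A_{k+1}\mid A_k,B_k) \;\geq\; F(A_{k+1},B_k),
\end{equation*}
where the first equality is~\eqref{major-1-2-eq}, the middle inequality is the definition of $A_{k+1}$ as a minimizer of $Q_A(\cdot\mid A_k,B_k)$ in~\eqref{update-A1}, and the last inequality is the majorization property~\eqref{major-1-2}. For the second inequality $F(A_{k+1},B_k)\geq F(A_{k+1},B_{k+1})$, the symmetric chain is
\begin{equation*}
F(A_{k+1},B_k) \;=\; Q_B(B_k\mid A_{k+1},B_k) \;\geq\; Q_B(B_{k+1}\mid A_{k+1},B_k) \;\geq\; F(A_{k+1},B_{k+1}),
\end{equation*}
again invoking~\eqref{major-1-2-eq}, the definition~\eqref{update-B1} of $B_{k+1}$ as the minimizer of $Q_B(\cdot\mid A_{k+1},B_k)$, and finally~\eqref{major-1-2}. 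Concatenating the two chains yields the stated result.

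The only subtlety worth emphasizing in the write-up is that in the second chain the majorizer $Q_B$ is based at $(A_{k+1},B_k)$ rather than $(A_k,B_k)$, which is consistent with the recursion defined in~\eqref{update-B1} and with the closed-form update in Algorithm~\ref{alg:fast-als} (step~4 refills $X^*$ using the freshly updated $A$). Without this re-basing, the tightness identity $Q_B(B_k\mid A_{k+1},B_k)=F(A_{k+1},B_k)$ would fail, and the sandwich would break. Beyond this observation, the proof is entirely mechanical.
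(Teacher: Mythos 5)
Your proposal is correct and follows essentially the same argument as the paper's own proof: the same majorization--minimization sandwich using the tightness identity~\eqref{major-1-2-eq}, the optimality of the updates~\eqref{update-A1}--\eqref{update-B1}, and the majorization property~\eqref{major-1-2}, including the correct re-basing of $Q_B$ at $(A_{k+1},B_k)$, which is exactly the paper's chain~\eqref{nesting-1}. Nothing further is needed.
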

\begin{proof}
Let the current iterate estimates be $(A_{k}, B_{k})$.
We will first consider the update in $A$, leading to $A_{k+1}$~\eqref{update-A1}.
$$\min_{Z_{1}} \;\; Q_A(Z_1|A_{k},B_{k})   \leq  Q_A(A_{k} |A_{k},B_{k}) =  F(A_k, B_k)$$
Note that,  $\min_{Z_{1}}\; Q_A(Z_1|A_{k},B_{k})  = Q_A(A_{k+1}|A_k,B_k)$, by definition of $A_{k+1}$ in \eqref{update-A1}.

Using~\eqref{major-1-2} we get that 
$Q_A(A_{k+1}|A_k,B_k)  \geq F(A_{k+1}, B_{k}).$ Putting together the pieces we get:
$F(A_k, B_k) \ge F(A_{k+1}, B_k).$

Using an argument exactly similar to the above for the update in $B$ we have:
\begin{align}\label{nesting-1}
F(A_{k+1}, B_k) = Q_B(B_k|A_{k+1},B_k) \ge Q_B(B_{k+1}|A_{k+1},B_k) \ge F(A_{k+1}, B_{k+1}).
\end{align}
This establishes that 
$F(A_k,B_k) \ge F(A_{k+1}, B_{k+1})$ for all $k$, thereby completing the proof of the theorem.
\end{proof}

\subsection{Convergence Rates}\label{conv-rates-ch}
The previous section derives some elementary properties of the \texttt{softImpute-ALS} algorithm, namely the updates lead to a deceasing 
sequence of objective values and every limit point of the sequence is a stationary point\footnote{under minor regularity conditions, namely $\lambda>0$} of the optimization problem~\eqref{eq:mmmf}. These are all asymptotic characterizations and do not inform us about the rate at which the \texttt{softImpute-ALS} algorithm reaches a stationary point.

We present the following lemma:
\begin{lem}\label{lem:diff-fast-als1}
Let $(A_{k}, B_{k})$ denote the values of the factors at iteration $k$. We have the following: 
\begin{equation}\label{upd-als-two-1}
\begin{aligned}
F(A_k, B_k) - F(A_{k+1}, B_{k+1}) & \geq  \half \left( \| ( A_{k}   - A_{k+1}) B^T_{k} \|_F^2 + \| A_{k+1} (B_{k+1}   -  B_{k} )^T \|_F^2 \right) \\
&+ \frac{\lambda}{2} \left ( \| A_{k}   - A_{k+1}  \|_F^2  + \| B_{k+1}   -  B_{k} \|_F^2 \right)
\end{aligned}
\end{equation}
\end{lem}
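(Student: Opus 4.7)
The plan is to exploit the fact that $Q_A(\cdot\,|A_k,B_k)$ and $Q_B(\cdot\,|A_{k+1},B_k)$ are strongly convex quadratics in their first argument, whose minimizers are exactly the \texttt{softImpute-ALS} updates $A_{k+1}$ and $B_{k+1}$. For such quadratics, the gap from the minimum is given by an exact expression, and combining this with the majorization property \eqref{major-1-2} and the tangency \eqref{major-1-2-eq} gives the claimed per-step descent with exactly the right quadratic terms.

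First I would use the definition of $X^*_{A_k,B_k} := P_\Omega(X) + P_\Omega^\perp(A_kB_k^T)$ to rewrite
$$
Q_A(Z_1|A_k,B_k) = \tfrac12\|X^*_{A_k,B_k} - Z_1 B_k^T\|_F^2 + \tfrac{\lambda}{2}\|Z_1\|_F^2 + \tfrac{\lambda}{2}\|B_k\|_F^2,
$$
which is quadratic in $Z_1$ with Hessian (acting on $Z_1$) given by $Z_1\mapsto Z_1(B_k^TB_k+\lambda I)$. The unique minimizer is $A_{k+1} = X^*_{A_k,B_k} B_k(B_k^TB_k+\lambda I)^{-1}$, matching step~3 of Algorithm~\ref{alg:fast-als}. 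Expanding the quadratic around $A_{k+1}$ yields the identity
$$
Q_A(Z_1|A_k,B_k) - Q_A(A_{k+1}|A_k,B_k) = \tfrac12\|(Z_1-A_{k+1})B_k^T\|_F^2 + \tfrac{\lambda}{2}\|Z_1-A_{k+1}\|_F^2
$$
for every $Z_1$; this is the one routine calculation, obtained by writing $Z_1 = A_{k+1}+(Z_1-A_{k+1})$ and noting that the cross term vanishes by first-order optimality of $A_{k+1}$.

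Next, specialize to $Z_1 = A_k$. By tangency \eqref{major-1-2-eq}, $Q_A(A_k|A_k,B_k) = F(A_k,B_k)$; by majorization \eqref{major-1-2}, $Q_A(A_{k+1}|A_k,B_k) \geq F(A_{k+1},B_k)$. Subtracting and using the identity above gives
$$
F(A_k,B_k) - F(A_{k+1},B_k) \geq \tfrac12\|(A_k-A_{k+1})B_k^T\|_F^2 + \tfrac{\lambda}{2}\|A_k-A_{k+1}\|_F^2.
$$
An identical argument applied to $Q_B(\cdot\,|A_{k+1},B_k)$, whose minimizer is $B_{k+1}$, gives
$$
F(A_{k+1},B_k) - F(A_{k+1},B_{k+1}) \geq \tfrac12\|A_{k+1}(B_{k+1}-B_k)^T\|_F^2 + \tfrac{\lambda}{2}\|B_{k+1}-B_k\|_F^2.
$$
Adding the two inequalities telescopes $F(A_k,B_k) - F(A_{k+1},B_{k+1})$ on the left and produces precisely \eqref{upd-als-two-1} on the right.

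There is no real obstacle; the only point requiring any care is the exact-gap identity for a quadratic majorizer, and the minor bookkeeping of tracking which surrogate is tangent at which iterate (the $A$-update uses $Q_A$ anchored at $(A_k,B_k)$, whereas the $B$-update uses $Q_B$ anchored at $(A_{k+1},B_k)$, so the $B_k^T$ in the first quadratic term and $A_{k+1}$ in the second are forced by the anchor, not chosen). The strong-convexity parameter $\lambda$ in $Z_1$ (and $Z_2$) is what produces the second bracket of \eqref{upd-als-two-1}, while the first bracket comes from the $B_k^TB_k$ (resp.\ $A_{k+1}^TA_{k+1}$) part of the Hessian.
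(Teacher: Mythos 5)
Your proof is correct and takes essentially the same route as the paper: tangency \eqref{major-1-2-eq} and majorization \eqref{major-1-2} of the surrogates $Q_A,Q_B$, an exact quadratic-gap identity at the minimizer (the cross term vanishing by first-order optimality), and telescoping the $A$- and $B$-steps. The only cosmetic difference is that you derive the gap identity inline in matrix form, whereas the paper isolates it as a separate ridge-regression lemma (Lemma~\ref{lem:diff-ridge1}) proved by a second-order Taylor expansion.
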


For any two matrices $A$ and $B$ respectively define
$A^{+}, B^{+}$ as follows:
\begin{equation}\label{defn-update-1}
A^+ \in \argmin_{Z_{1}} \; Q_{A}( Z_{1}  | A , B), \;\; \quad \;\; B^+ \in \argmin_{Z_{2}} \; Q_{B}( Z_{2}  | A , B)
\end{equation}
We will consequently define the following:
\begin{equation}\label{defn:metric1-0}
\begin{aligned}
\Delta \left (  \left( A, B  \right ) , \left (  A^{+}, B^{+}    \right)    \right ):= \half \left( \| (A  - A^{+} )B^T\|_F^2 + \| A^{+} (B   - B^{+} )^T \|_F^2 \right) & \\
 + \frac{\lambda}{2} \left ( \| A   - A^{+}\|_F^2  + \| B   - B^{+}\|_F^2 \right)&
\end{aligned}
\end{equation}

\begin{lem}\label{lem-1-stat-1}
$\Delta \left (  \left( A, B  \right ) , \left (  A^{+}, B^{+}    \right)    \right )=0$ \emph{iff}
$A,B$ is a fixed point of {\em \texttt{softImpute-ALS}}.
\begin{proof}
Section~\ref{proof-lem-1-stat-1}.
\end{proof}
\end{lem}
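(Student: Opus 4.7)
The statement decomposes cleanly into two directions, and the proof is essentially an accounting argument rather than a hard computation. I would separate the $\lambda > 0$ case (which is the clean one and clearly the intended setting, given the paper's remarks on boundedness) from the degenerate $\lambda = 0$ case.

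For the forward direction, suppose $(A,B)$ is a fixed point of \texttt{softImpute-ALS}, i.e.\ $A \in \argmin_{Z_1} Q_A(Z_1|A,B)$ and $B \in \argmin_{Z_2} Q_B(Z_2|A,B)$. When $\lambda > 0$, both $Q_A(\cdot|A,B)$ and $Q_B(\cdot|A,B)$ are strongly convex quadratics in their free argument, so their minimizers are unique. Therefore the $A^+$ and $B^+$ defined in \eqref{defn-update-1} must equal $A$ and $B$ respectively. Substituting into \eqref{defn:metric1-0} shows every term is zero.

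For the reverse direction, note that $\Delta$ is a sum of four nonnegative quantities, so $\Delta = 0$ forces each to vanish. The regularization terms $\tfrac{\lambda}{2}\|A - A^+\|_F^2$ and $\tfrac{\lambda}{2}\|B - B^+\|_F^2$ give $A = A^+$ and $B = B^+$ whenever $\lambda > 0$; by the very definition of $A^+, B^+$, this is exactly the statement that $(A,B)$ is a fixed point.

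The only mildly delicate piece is $\lambda = 0$, where the regularization terms in $\Delta$ disappear and one must instead extract the fixed-point property from $\|(A - A^+)B^T\|_F^2 = 0$ and $\|A^+(B - B^+)^T\|_F^2 = 0$. The first equation yields $AB^T = A^+B^T$, and since $Q_A(\cdot|A,B)$ depends on $Z_1$ only through $Z_1 B^T$ when $\lambda = 0$, this shows $A$ attains the same objective value as $A^+$ and hence lies in $\argmin Q_A(\cdot|A,B)$. The analogous argument for $B$ is the main obstacle: one needs $A(B - B^+)^T = 0$, but the expression in $\Delta$ only directly provides $A^+(B - B^+)^T = 0$. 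This gap is closed by using the identity $AB^T = A^+ B^T$ just derived together with the characterization of $B^+$ as giving the projection of $X^*_{A,B}$ onto the column space of $A$; one verifies that $B$ produces the same fitted matrix, hence is also a minimizer of $Q_B(\cdot|A,B)$. In either regime, the proof does not require invoking earlier theorems beyond the definitions of $Q_A$, $Q_B$, and the fixed-point condition itself.
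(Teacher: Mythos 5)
Your argument for $\lambda>0$ is correct, but it takes a genuinely different route from the paper's. The paper never invokes uniqueness of the ridge minimizers: in the reverse direction it uses the exact second-order identity for ridge regression (Lemma~\ref{lem:diff-ridge1}, via \eqref{line-1-0}--\eqref{line-1-1}) to conclude from $\Delta=0$ that $Q_A(A|A,B)=Q_A(A^+|A,B)=\min_{Z_1}Q_A(Z_1|A,B)$, so that $A$ \emph{itself} attains the minimum (and ``analogously'' for $B$), and then the majorization/nesting chain \eqref{nesting-1} gives $F(A,B)=F(A^+,B^+)$; in the forward direction the paper simply asserts the claim. You instead derive $A=A^+$ and $B=B^+$ outright---from strong convexity (uniqueness of the minimizer) in the forward direction, and from the $\frac{\lambda}{2}\|A-A^+\|_F^2$ and $\frac{\lambda}{2}\|B-B^+\|_F^2$ terms of $\Delta$ in the reverse direction. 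For $\lambda>0$ your version is shorter, avoids the ridge identity entirely, and supplies the uniqueness reasoning behind the forward direction that the paper leaves implicit; what the paper's identity-based route buys is that its $A$-part does not need uniqueness at all, so that half of the argument survives non-unique minimizers.

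Your $\lambda=0$ discussion, however, contains a genuine gap, and no repair is possible: the equivalence is actually \emph{false} at $\lambda=0$ when the iterates are rank-deficient. You correctly isolate the obstacle ($\Delta$ controls $A^+(B-B^+)^T$, not $A(B-B^+)^T$), but your proposed fix needs exactly $(A-A^+)(B^+)^T=0$, and this does not follow from $(A-A^+)B^T=0$, because the rows of $B^+$ need not lie in the row space of $B$. Concretely: take all entries observed, $X=I_2$, $r=2$, $\lambda=0$, $A=I_2$, $B=\mathrm{diag}(1,0)$. The natural pseudo-inverse updates give $A^+=\mathrm{diag}(1,0)$ and then $B^+=B$, so $(A-A^+)B^T=0$ and $A^+(B-B^+)^T=0$, i.e.\ $\Delta=0$; yet $\nabla_B F(A,B)=\left(P_\Omega(AB^T-X)\right)^TA+\lambda B=\mathrm{diag}(0,-1)\neq 0$, so $(A,B)$ is not a fixed point. (The paper's own proof is equally loose here---its ``analogous result holds true for $B$'' hides the same $A$-versus-$A^+$ mismatch---but the paper effectively operates under $\lambda>0$, see Remark~\ref{rem-1} and Theorem~\ref{thm-glob-cov1}, where the issue disappears because $A=A^+$ and $B=B^+$.) The honest conclusion is to state and prove the lemma for $\lambda>0$, or to add a full-column-rank assumption on the iterates when $\lambda=0$, rather than to claim the projection argument closes the gap.
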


We will use the following notation
\begin{equation}\label{defn:metric1}
\begin{aligned}
\eta_{k} :=& \Delta \left (  \left( A_{k}, B_{k}  \right ) , \left (  A_{k+1}, B_{k+1}    \right)    \right )
\end{aligned}
\end{equation}

Thus $\eta_{k}$ can be used to quantify how close $(A_k,B_k)$ is from a stationary point.

If $\eta_{k} >0$ it means the Algorithm will make progress in improving the quality of the solution. 
As a consequence of the monotone decreasing property of 
the sequence of objective values $F(A_k,B_k)$ and Lemma~\ref{lem:diff-fast-als1},
we have that, $\eta_{k} \rightarrow 0$ as $k \rightarrow \infty$. The following theorem characterizes the rate at which $\eta_{k}$ converges to zero.

%

\begin{thm}\label{thm:conv-rate-1}
Let $(A_{k}, B_{k}), k \geq 1$ be the sequence generated by the {\em \texttt{softImpute-ALS}} algorithm. 
The decreasing sequence of objective values
$F(A_{k}, B_{k})$ converges to $f^{\infty} \geq 0$ (say) and the quantities $\eta_{k} \rightarrow 0$.  

Furthermore, we have the  following finite convergence rate of the {\em \texttt{softImpute-ALS}} algorithm:
\begin{equation}\label{thm-rate-1}
\min_{1 \leq k \leq K } \eta_{k}  \leq  \left ( F( A_{1}, B_{1} ) - f^{\infty} ) \right) / K 
\end{equation}
\begin{proof}
See Section~\ref{proof-thm:conv-rate-1}
\end{proof}
\end{thm}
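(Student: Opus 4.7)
My plan is to reduce the theorem to a standard telescoping argument driven by the one-step decrease bound in Lemma~\ref{lem:diff-fast-als1}. First I would establish the existence of $f^{\infty}$: Theorem~\ref{thm:mono-props} gives that $\{F(A_k,B_k)\}$ is monotonically decreasing, and the explicit form of $F$ in~\eqref{eq:mmmf} shows $F(A,B)\ge 0$ for all $A,B$. Thus the sequence is bounded below and converges to some limit $f^{\infty}\ge 0$.

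Next, recalling the definition~\eqref{defn:metric1} of $\eta_k$ and comparing it with the right-hand side of the inequality in Lemma~\ref{lem:diff-fast-als1}, we immediately obtain
\begin{equation*}
F(A_k,B_k)-F(A_{k+1},B_{k+1})\;\ge\;\eta_k \qquad \text{for every } k\ge 1.
\end{equation*}
Summing this telescoping inequality from $k=1$ to $K$ gives
\begin{equation*}
\sum_{k=1}^{K}\eta_k \;\le\; F(A_1,B_1)-F(A_{K+1},B_{K+1})\;\le\; F(A_1,B_1)-f^{\infty},
\end{equation*}
where the second inequality uses monotonicity and the definition of $f^\infty$. Since $\eta_k\ge 0$, bounding $\min_{1\le k\le K}\eta_k$ by the average $\frac{1}{K}\sum_{k=1}^{K}\eta_k$ yields~\eqref{thm-rate-1}.

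Finally, the convergence $\eta_k\to 0$ follows because letting $K\to\infty$ in the displayed bound shows $\sum_{k=1}^{\infty}\eta_k\le F(A_1,B_1)-f^{\infty}<\infty$, so the summand tends to zero. The only non-routine ingredient is Lemma~\ref{lem:diff-fast-als1} itself (which quantifies the per-iteration progress using strong convexity of the majorizers $Q_A$ and $Q_B$ in $\lambda$ and the quadratic term tied to the current $B^T$ or $A$); once that per-step descent is in hand, the rate $O(1/K)$ is an immediate consequence of telescoping, exactly as in the standard analysis of majorization-minimization schemes.
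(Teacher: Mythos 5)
Your proposal is correct and follows essentially the same route as the paper's proof: the one-step descent bound of Lemma~\ref{lem:diff-fast-als1}, telescoping over $k=1,\ldots,K$, bounding the minimum by the average, and invoking monotone convergence of $F(A_k,B_k)$ to $f^{\infty}$. Your explicit summability argument for $\eta_k \to 0$ is a slightly cleaner justification of a claim the paper states more informally, but the substance is identical.
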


The above theorem establishes a $O(\frac{1}{K})$ convergence rate of \texttt{softImpute-ALS}; in other words, 
for any $\epsilon>0$, we need at most $K = O(\frac{1}{\epsilon})$ iterations to arrive at a point $(A_{k^*}, B_{k^*})$ such that 
 $\eta_{k^*} \leq \epsilon$, where, $1 \leq k^* \leq K$.
Note that Theorem~\ref{thm:conv-rate-1} establishes convergence of the algorithm for \emph{any} value of $\lambda \geq 0$. 
We found in our numerical experiments that the value of $\lambda$ has an important role to play in the speed of convergence of the algorithm.
In the following corollary, we provide convergence rates that make the role of $\lambda$ explicit.

The following corollary employs three different distance measures to measure the closeness of a point from stationarity.

\begin{cor}\label{cor-1-conv-rate}
Let $(A_k, B_k), k \geq 1$ be defined as above. Assume that for all $k\geq 1$
\begin{equation}\label{low-up-bound1}
 \ell^{U} \M{I} \succeq  B_{k}' B_{k}  \succeq \ell^{L}\M{I}, \;\;  \ell^{U} \M{I} \succeq  A_{k}' A_{k}   \succeq \ell^{L}\M{I},
 \end{equation}
where, $\ell^U, \ell^L$ are constants independent of $k$.

Then we have the following:

\begin{align}
 &\min_{1 \leq k \leq K } \left( \|  (A_{k}   - A_{k+1} )\|_F^2  +   \| B_{k}   -  B_{k+1}  \|_F^2 \right) \leq  
\frac{2}{(\ell^{L}+ \lambda )}\left(\frac{F( A_{1}, B_{1} ) - f^{\infty}}{K} \right) \label{bound-diff-norm-1}  \\
& \min_{1 \leq k \leq K } \left ( 
\|  (A_{k}   - A_{k+1} )B_{k}^T\|_F^2 + \| A_{k+1} ( B_{k}   -  B_{k+1})^T  \|_F^2 \right) \leq  
\frac{2 \ell^{U}}{\lambda + \ell_U}  \left(\frac{F( A_{1}, B_{1} ) - f^{\infty}}{K} \right) \label{bound-diff-norm-2} \\
& \min_{1 \leq k \leq K }  \left(  \| \nabla_{A} f (A_{k},B_{k}) \|^2 +  \| \nabla_{B} f (A_{k+1},B_{k}) \|^2  \right) \leq
\frac{2(\ell^{U})^2}{(\ell^{L}+ \lambda )}\left(\frac{F( A_{1}, B_{1} ) - f^{\infty}}{K} \right)\label{bound-diff-norm-3}
\end{align}
where, $\nabla_{A} f (A,B)$  (respectively, $\nabla_{B} f (A,B)$) denotes the partial derivative of $F(A,B)$ wrt $A$
(respectively, $B$).
\begin{proof}
See Section~\ref{proof:cor-1-conv-rate}.
\end{proof}
\end{cor}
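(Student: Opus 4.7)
The plan is to reduce each of the three bounds to Theorem~\ref{thm:conv-rate-1}, which already controls $\min_{1\le k\le K}\eta_k$ by $(F(A_1,B_1)-f^{\infty})/K$. The remaining task is to relate each target quantity back to $\eta_k$ using the spectral assumption~\eqref{low-up-bound1}. Recall that
$$\eta_k = \tfrac12\bigl(\|(A_k-A_{k+1})B_k^T\|_F^2 + \|A_{k+1}(B_k-B_{k+1})^T\|_F^2\bigr)+\tfrac{\lambda}{2}\bigl(\|A_k-A_{k+1}\|_F^2+\|B_k-B_{k+1}\|_F^2\bigr).$$

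For~\eqref{bound-diff-norm-1}, the lower spectral bound $B_k^TB_k\succeq \ell^L I$ implies $\|(A_k-A_{k+1})B_k^T\|_F^2\ge \ell^L\|A_k-A_{k+1}\|_F^2$, and analogously $\|A_{k+1}(B_k-B_{k+1})^T\|_F^2\ge \ell^L\|B_k-B_{k+1}\|_F^2$ using $A_{k+1}^TA_{k+1}\succeq \ell^L I$. Substituting yields $\eta_k\ge \tfrac{\ell^L+\lambda}{2}\bigl(\|A_k-A_{k+1}\|_F^2+\|B_k-B_{k+1}\|_F^2\bigr)$, which combined with Theorem~\ref{thm:conv-rate-1} proves~\eqref{bound-diff-norm-1}. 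For~\eqref{bound-diff-norm-2} I reverse the direction: the upper bound $B_k^TB_k\preceq \ell^U I$ gives $\lambda\|A_k-A_{k+1}\|_F^2\ge (\lambda/\ell^U)\|(A_k-A_{k+1})B_k^T\|_F^2$, and similarly for the $B$ term, producing $\eta_k\ge \tfrac{\ell^U+\lambda}{2\ell^U}\bigl(\|(A_k-A_{k+1})B_k^T\|_F^2+\|A_{k+1}(B_k-B_{k+1})^T\|_F^2\bigr)$, which yields~\eqref{bound-diff-norm-2}.

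For~\eqref{bound-diff-norm-3}, the key algebraic step is the identity
$$\nabla_A f(A_k,B_k) = (A_k-A_{k+1})(B_k^TB_k+\lambda I),$$
and analogously $\nabla_B f(A_{k+1},B_k) = (B_k-B_{k+1})(A_{k+1}^TA_{k+1}+\lambda I)$. To derive the first, I would use the first-order optimality condition for the $A$-update in Algorithm~\ref{alg:fast-als}, namely $A_{k+1}(B_k^TB_k+\lambda I)=X^*_{A_k,B_k}B_k$ with $X^*_{A_k,B_k}=P_\Omega(X)+P_\Omega^\perp(A_kB_k^T)$; on the other hand, $\nabla_A f(A_k,B_k)=-P_\Omega(X-A_kB_k^T)B_k+\lambda A_k$, and splitting $A_kB_k^T=P_\Omega(A_kB_k^T)+P_\Omega^\perp(A_kB_k^T)$ rewrites this as $-X^*_{A_k,B_k}B_k+A_k(B_k^TB_k+\lambda I)$. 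Subtracting gives the identity; the analogous derivation handles the $B$-side. Spectral bounds then produce $\|\nabla_A f(A_k,B_k)\|_F^2\le(\ell^U+\lambda)^2\|A_k-A_{k+1}\|_F^2$ (and similarly for $B$), and combining with~\eqref{bound-diff-norm-1} delivers~\eqref{bound-diff-norm-3}.

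The main obstacle will be the gradient-to-iterate-difference identity: matching $\nabla_A f(A_k,B_k)$ to the first-order optimality expression for the $A$-subproblem requires careful bookkeeping of the $P_\Omega$/$P_\Omega^\perp$ splittings of $A_kB_k^T$, and the correct indexing $(A_k,B_k)$ for the $A$-gradient versus $(A_{k+1},B_k)$ for the $B$-gradient has to be respected so that the optimality conditions exactly cancel the "unwanted" cross-terms. Once this identity is secured, the remaining estimates are routine spectral inequalities on the quadratic forms in $A_k^TA_k$ and $B_k^TB_k$.
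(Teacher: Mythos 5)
Your proposal follows the same strategy as the paper's proof for all three inequalities: lower-bound $\eta_k$ via the spectral assumption \eqref{low-up-bound1} and then invoke Theorem~\ref{thm:conv-rate-1}. For \eqref{bound-diff-norm-1} and \eqref{bound-diff-norm-2} your derivations coincide exactly with the paper's and are correct.

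For \eqref{bound-diff-norm-3}, your identity $\nabla_A f(A_k,B_k)=(A_k-A_{k+1})(B_k^TB_k+\lambda I)$ (and its $B$-analogue with the indexing $(A_{k+1},B_k)$) is correct, and it is precisely the content of the paper's argument: the paper phrases it as the MM gradient-matching property $\nabla_{Z_1}Q_A(A_k|A_k,B_k)=\nabla_A F(A_k,B_k)$ together with the optimality condition $\nabla_{Z_1}Q_A(A_{k+1}|A_k,B_k)=0$, so that the partial gradient of $F$ equals a difference of $Q_A$-gradients. The one point where you and the statement part company is the constant. From your identity, the sharp bound is $\|\nabla_A f(A_k,B_k)\|_F^2\le(\ell^U+\lambda)^2\|A_k-A_{k+1}\|_F^2$, since the operator norm of $B_k^TB_k+\lambda I$ is at most $\ell^U+\lambda$ (and this is attained when $B_k^TB_k=\ell^U I$). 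Combining with \eqref{bound-diff-norm-1} therefore yields \eqref{bound-diff-norm-3} with $(\ell^U+\lambda)^2$ in place of $(\ell^U)^2$, which is strictly weaker than the stated inequality when $\lambda>0$; so your closing claim that this ``delivers'' \eqref{bound-diff-norm-3} is not accurate as written. Note, however, that this is not a defect of your argument relative to the paper's: the paper's proof asserts in \eqref{eqn-1-1} that $\nabla Q_A(\cdot\,|A_k,B_k)$ is Lipschitz with constant $\ell^U$, whereas by your (correct) computation the constant is $\ell^U+\lambda$. Your more careful bookkeeping thus exposes what appears to be a slip in the paper's stated constant; the version of \eqref{bound-diff-norm-3} that this line of argument actually supports has numerator $2(\ell^U+\lambda)^2$.
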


Inequalities~\eqref{bound-diff-norm-1}--\eqref{bound-diff-norm-3} are statements 
about different notions of distances between successive iterates. These may be employed to understand the  
convergence rate of \texttt{softImpute-ALS}. 

Assumption~\eqref{low-up-bound1} is a minor one. While it may not be able to estimate $\ell^U$ prior to 
running the algorithm, a finite value of $\ell^U$ is guaranteed as soon as $\lambda >0$. 
The lower bound $\ell_{L} > 0$, if both $A_{1} \in \Re^{ m \times r}, B_{1} \in \Re^{n \times r}$ have rank $r$ and the rank 
remains the same across the iterates. If the solution to problem~\eqref{eq:mmmf} has a rank smaller than $r$, then 
$\ell_{L}=0$, however, this situation is typically avoided since a small value of $r$ leads to lower computational cost per iteration of
the \texttt{softImpute-ALS} algorithm. The constants appearing as a part of the rates 
in~\eqref{bound-diff-norm-1}--\eqref{bound-diff-norm-3} are dependent upon $\lambda$. The 
constants are smaller for larger values of $\lambda$. 
Finally we note that the algorithm does not require any information about the constants
$\ell^L, \ell^U$ appearing as a part of the rate estimates.

\subsection{Asymptotic Convergence Results}\label{sec:aympt-results-1}
In this section we derive some properties of the limiting behavior of the sequence $A_k, B_{k}$, in particular we 
examine the properties of the limit points of the sequence $A_{k}, B_{k}$.

At the beginning we recall the notion of first order stationarity of a point $A_*,B_*$.
We say that $A_*, B_* $ is said to be a first order stationary point for the problem~\eqref{eq:mmmf} if the following holds:
\begin{equation}\label{stat-point-defn}
\nabla_{A} f (A_*, B_*) = 0, \;\;\;\; \nabla_{B} f (A_*, B_*) = 0.
\end{equation}
An equivalent restatement of condition~\eqref{stat-point-defn} is:
\begin{equation}\label{stat-point-defn-1}
A_* \in  \argmin_{Z_{1}}  Q_A(Z_{1}|A_*,B_*), \;\;\;\; B_* \in  \argmin_{Z_{2}}  Q_B(Z_{2}|A_*,B_*),
\end{equation}
i.e.,  $A_*,B_*$ is a fixed point of the \texttt{softImpute-ALS} algorithm updates.

We now consider uniqueness properties of the limit points of $(A_{k}, B_{k}), k \geq 1$. 
Even in the fully observed case, the stationary points of the problem~\eqref{eq:mmmf} are not unique in $A_*,B_*$; due to orthogonal invariance. 
Addressing convergence of $(A_k,B_k)$ becomes trickier if two singular values of $A_*B_*^T$ are tied. 
In this vein we have the following result:
\begin{thm}\label{thm-glob-cov1}
Let $\{(A_k,B_k)\}_k$ be the sequence of iterates generated by Algorithm \ref{alg:fast-als}. For $\lambda>0$,  we have:
\begin{enumerate}
\item [(a)] Every limit point of $\{(A_k,B_k)\}_k$ is a stationary point of problem~\eqref{eq:mmmf}.
\item[(b)]  Let $B_*$ be any limit point of the sequence $B_{k}, k \geq 1$, with $B_{\nu} \rightarrow B_*$, where, $\nu$ is a subsequence of  $\{ 1, 2, \ldots, \}$.
Then the sequence $A_{\nu}$ converges. 
\end{enumerate}
Similarly, let $A_*$ be any limit point of the sequence $A_{k}, k \geq 1$, with $A_{\mu} \rightarrow B_*$, where, $\mu$ is a subsequence of  $\{ 1, 2, \ldots, \}$.
Then the sequence $A_{\mu}$ converges. 

\begin{proof}
See Section~\ref{proof:thm-glob-cov1}
\end{proof}
\end{thm}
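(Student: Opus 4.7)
The plan is to establish two preliminary facts under the assumption $\lambda>0$---boundedness of the iterate sequence and the estimate $\|A_{k+1}-A_k\|_F\to 0$, $\|B_{k+1}-B_k\|_F\to 0$---then deduce part~(a) by a continuity argument, and finally deduce part~(b) by combining part~(a) with a uniqueness-in-$A$ property for the stationarity equation at fixed $B_*$.

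For the preliminaries, the bound $F(A,B)\geq \frac{\lambda}{2}(\|A\|_F^2+\|B\|_F^2)$ combined with Theorem~\ref{thm:mono-props} keeps the iterates inside the sublevel set $\{(A,B): F(A,B)\leq F(A_1,B_1)\}$, which is bounded whenever $\lambda>0$. Telescoping Lemma~\ref{lem:diff-fast-als1} and using $F(A_k,B_k)\searrow f^\infty$ gives $\sum_{k=1}^\infty \frac{\lambda}{2}(\|A_k-A_{k+1}\|_F^2+\|B_k-B_{k+1}\|_F^2)\leq F(A_1,B_1)-f^\infty<\infty$, so both successive differences tend to zero.

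For part~(a), I would take any convergent subsequence $(A_\nu,B_\nu)\to(A_*,B_*)$. The preliminary estimate forces $(A_{\nu+1},B_{\nu+1})\to(A_*,B_*)$ along the same indices. The closed-form update $A_{k+1}=X^*_{A_k,B_k}B_k(B_k^TB_k+\lambda I)^{-1}$ and its $B$-analogue are continuous in $(A_k,B_k)$---invertibility of $B_k^TB_k+\lambda I$ is guaranteed by $\lambda>0$---so passing to the limit yields $A_*=X^*_{A_*,B_*}B_*(B_*^TB_*+\lambda I)^{-1}$ and the symmetric identity for $B_*$. These are precisely the fixed-point conditions~\eqref{stat-point-defn-1}, which are equivalent to first-order stationarity~\eqref{stat-point-defn}.

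For part~(b) and its symmetric analogue, suppose $B_\nu\to B_*$. By boundedness, any sub-subsequence of $\{A_\nu\}$ has a further convergent sub-subsequence $A_{\nu'}\to A'$, producing a limit point $(A',B_*)$ of the full sequence. Part~(a) yields $\nabla_A F(A',B_*)=0$, i.e., $\lambda A'+P_\Omega(A'B_*^T)B_*=P_\Omega(X)B_*$. The main obstacle here is ruling out several distinct candidate limits $A'$: the orthogonal-rotation non-uniqueness of the underlying factorization could, a priori, let $A_\nu$ oscillate among different rotations. I would dispatch this by viewing the left-hand side as a linear operator $L_{B_*}(A):=\lambda A+P_\Omega(AB_*^T)B_*$ on $m\times r$ matrices and computing $\langle L_{B_*}(A),A\rangle=\lambda\|A\|_F^2+\|P_\Omega(AB_*^T)\|_F^2\geq \lambda\|A\|_F^2$, so $L_{B_*}$ is strictly positive definite---and therefore invertible---whenever $\lambda>0$. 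Hence $A'$ is uniquely determined by $B_*$, every sub-subsequential limit of $\{A_\nu\}$ coincides with this $A'$, and boundedness forces $A_\nu\to A'$. The symmetric claim for limit points of $\{A_k\}$ follows by interchanging the roles of $A$ and $B$.
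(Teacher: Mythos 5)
Your proposal is correct, and it splits naturally into two halves relative to the paper. Part (a) is essentially the paper's argument: the paper also gets boundedness from $\lambda>0$, gets asymptotic regularity from Theorem~\ref{thm:conv-rate-1} ($\eta_k\to 0$, which via Lemma~\ref{lem:diff-fast-als1} contains your summability bound $\sum_k \frac{\lambda}{2}\bigl(\|A_k-A_{k+1}\|_F^2+\|B_k-B_{k+1}\|_F^2\bigr)<\infty$), and then concludes fixed-pointness by a ``simple subsequence argument''---your passage to the limit in the closed-form updates $A_{k+1}=X^*_{A_k,B_k}B_k(B_k^TB_k+\lambda I)^{-1}$ is exactly the continuity mechanism that fills in that phrase. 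Part (b), however, takes a genuinely different route. The paper argues by contradiction with two subsequences: assuming distinct limit points $A_1\neq A_2$ of $\{A_\nu\}$, it writes $F(A_{k_2},B_{k_2})-F(A_{k_1+1},B_{k_1})$ as a telescoping sum, invokes the $\lambda$-strong-convexity estimate of Lemma~\ref{lem:diff-ridge1} to lower-bound one term by $\frac{\lambda}{2}\|A_{k_2}-A_{k_1+1}\|_F^2$, and lets $k_1,k_2\to\infty$ so the left side tends to zero while the right side tends to $\frac{\lambda}{2}\|A_2-A_1\|_F^2$. You instead reduce (b) to (a): every sub-subsequential limit $(A',B_*)$ is stationary, and the stationarity equation $\lambda A+P_\Omega(AB_*^T)B_*=P_\Omega(X)B_*$ is a linear system whose operator $L_{B_*}$ satisfies $\langle L_{B_*}(A),A\rangle\geq\lambda\|A\|_F^2$, hence is invertible for $\lambda>0$; uniqueness of $A'$ plus compactness then forces $A_\nu\to A'$. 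The two arguments rest on the same underlying fact---$A\mapsto F(A,B_*)$ is $\lambda$-strongly convex, so its stationary point is unique---but yours is more modular and arguably more robust: it avoids the two-index bookkeeping, and in particular it sidesteps the delicate step where the paper applies Lemma~\ref{lem:diff-ridge1} to $F(\cdot,B_{k_1})$ at the point $A_{k_1+1}$, which minimizes the majorizer $Q_A(\cdot\,|A_{k_1},B_{k_1})$ rather than $F(\cdot,B_{k_1})$ itself, so that application needs extra justification as written. What the paper's version buys in exchange is that it works purely with objective values and the descent inequality, keeping the quantitative role of $\lambda$ visible as a descent modulus without ever forming or inverting an operator.
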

The above theorem is a partial result about the uniqueness of the limit points of the sequence $A_{k},B_{k}$.
The theorem implies that if the sequence $A_k$ converges, then the sequence $B_{k}$ must converge and vice-versa.
More generally, for every limit point of $A_k$, the associated $B_{k}$ (sub)sequence will converge. The same result holds true for the 
sequence $B_{k}$.

\begin{remark}\label{rem-1}
Note that the condition $\lambda>0$ is enforced due to technical reasons so that the sequence $(A_{k}, B_{k})$ remains bounded. If $\lambda=0$, then 
$A \leftarrow c A$ and $B \leftarrow \frac1c B$ for any $c>0$, leaves the objective function unchanged.  Thus one may take 
$c \rightarrow \infty$ making the sequence of updates unbounded without making any change to the values of the objective function.
 \end{remark}

%
%


\subsection{Properties of the \texttt{softImpute-ALS} updates in terms of Problem~\eqref{eq:mmmf}}\label{sec:connect-props}



The sequence $(A_{k}, B_{k})$ generated by Algorithm~(\ref{alg:fast-als}) are geared towards minimizing criterion~\eqref{eq:mmmf}, 
it is not obvious as to what the implications the sequence has for the convex problem~\eqref{eq:1}
In particular, we know that $F(A_{k}, B_{k})$ is decreasing---does this imply a monotone sequence $H(A_{k}B_{k}')$?
We show below that it is indeed possible to obtain a monotone decreasing sequence $H(A_{k}B_{k}')$ with a minor modification.
These modifications are exactly those implemented in Algorithm~\ref{alg:rrsimpute} in step~3.




The idea that plays a crucial role in this modification is the following 
inequality (for a proof see~\cite{mazumder09:_spect_regul_algor_for_learn}; see also remark~\ref{item:3} in Section~\ref{eMMMF}):
$$ \| AB^T\|_* \leq \half ( \| A \|_F^2 + \|B \|_F^2 ).$$
Note that equality holds above if we take a particular choice of $A$ and $B$ given by:
\begin{equation}\label{svd-hold-1}
 A=UD^{1/2}, B = VD^{1/2}, \;\;\;\; \text{where,}\;\;\;\; AB' = UDV^T\;\;\;\; (\text{SVD}),
 \end{equation}
is the SVD of $AB^T.$
The above observation implies that if $(A_{k}, B_{k})$ is generated by \texttt{softImpute-ALS} then 
$$F( A_{k}, B_{k}) \geq H(A_{k}B_{k}^T)$$
with equality holding if $A_{k},B_{k}^T$ are represented via~\eqref{svd-hold-1}. Note that this re-parametrization 
does not change the training error portion of the objective $F( A_{k}, B_{k})$, but decreases the ridge regularization term---and hence 
decreases the overall objective value when compared to that achieved by~\texttt{softImpute-ALS} without the reparametrization~\eqref{svd-hold-1}.

We thus have the following Lemma:
\begin{lem}\label{dec-seq-obj-als-1}
Let the sequence $(A_{k}, B_{k})$ generated by \texttt{softImpute-ALS}  be stored in terms of the factored SVD representation~\eqref{svd-hold-1}.
This results in a decreasing sequence of objective values in the nuclear norm penalized problem~\eqref{eq:1}:
$$ H(A_{k}B_{k}^T)  \geq  H(A_{k+1}B_{k+1}^T)$$
with $H(A_{k}B_{k}^T)= F(A_{k}, B_{k} )$, for all $k$.
The sequence $H(A_{k}B_{k}^T)$ thus converges to $f^{\infty}.$
\end{lem}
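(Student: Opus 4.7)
The plan is to combine three ingredients already established in the paper: (i) the monotone descent property $F(A_k,B_k)\ge F(A_{k+1},B_{k+1})$ from Theorem \ref{thm:monotone}; (ii) the variational characterization of the nuclear norm from Lemma \ref{lem:lem1}, which gives the pointwise bound $F(A,B)\ge H(AB^T)$; and (iii) the fact that this bound is tight precisely when $A,B$ take the balanced SVD form \eqref{svd-hold-1}. The whole statement then drops out as a bookkeeping observation.

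First I would unfold the definitions. For any factorization $M=AB^T$, Lemma \ref{lem:lem1} yields $\|M\|_*\le \tfrac12(\|A\|_F^2+\|B\|_F^2)$, hence
\begin{equation*}
H(AB^T)=\tfrac12\|P_\Omega(X-AB^T)\|_F^2+\lambda\|AB^T\|_*\ \le\ F(A,B),
\end{equation*}
with equality when $A=UD^{1/2}$ and $B=VD^{1/2}$ for $AB^T=UDV^T$ the SVD. Since by hypothesis the iterates $(A_k,B_k)$ are stored in this balanced form, we immediately get the equality half of the claim: $H(A_kB_k^T)=F(A_k,B_k)$ for every $k$.

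Next, for the monotonicity of $H(A_kB_k^T)$ I would describe one composite step. Starting from $(A_k,B_k)$ in balanced SVD form, the softImpute-ALS updates produce an intermediate pair $(\tilde A,\tilde B)$ with $F(\tilde A,\tilde B)\le F(A_k,B_k)$ by Theorem \ref{thm:monotone}. Rebalancing means taking the SVD $\tilde A\tilde B^T=U_{k+1}D_{k+1}V_{k+1}^T$ and setting $A_{k+1}=U_{k+1}D_{k+1}^{1/2}$, $B_{k+1}=V_{k+1}D_{k+1}^{1/2}$. Because $A_{k+1}B_{k+1}^T=\tilde A\tilde B^T$, the training-error term is unchanged, and by the tight case of Lemma \ref{lem:lem1},
\begin{equation*}
\tfrac12\bigl(\|A_{k+1}\|_F^2+\|B_{k+1}\|_F^2\bigr)=\|\tilde A\tilde B^T\|_*\ \le\ \tfrac12\bigl(\|\tilde A\|_F^2+\|\tilde B\|_F^2\bigr),
\end{equation*}
so $F(A_{k+1},B_{k+1})\le F(\tilde A,\tilde B)\le F(A_k,B_k)$. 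Chaining with the equalities from the balanced representation gives
\begin{equation*}
H(A_{k+1}B_{k+1}^T)=F(A_{k+1},B_{k+1})\le F(A_k,B_k)=H(A_kB_k^T),
\end{equation*}
which is the desired monotonicity.

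Finally, the convergence $H(A_kB_k^T)\to f^\infty$ is immediate from the equality $H(A_kB_k^T)=F(A_k,B_k)$ together with the convergence $F(A_k,B_k)\to f^\infty$ supplied by Theorem \ref{thm:conv-rate-1}. I do not anticipate a real obstacle: the only subtle point is that rebalancing preserves the product $AB^T$ (and hence the data-fit term), so descent of $F$ at the rebalancing step transfers losslessly to descent of $H$. All the substantive work has been done in Theorem \ref{thm:monotone} and Lemma \ref{lem:lem1}; the lemma is essentially the observation that enforcing the SVD parametrization at every iterate lets one read off an $H$-descent sequence from the $F$-descent sequence.
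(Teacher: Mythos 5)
Your proposal is correct and follows essentially the same route as the paper: the paper's own justification is exactly the combination of the inequality $\|AB^T\|_*\le\frac12(\|A\|_F^2+\|B\|_F^2)$ (Lemma~\ref{lem:lem1}) with equality at the balanced representation~\eqref{svd-hold-1}, the observation that rebalancing leaves the training-error term unchanged while not increasing the ridge term, and the descent property of Theorem~\ref{thm:monotone}. Your write-up merely makes explicit the composite step (ALS update followed by rebalancing) that the paper leaves implicit, which is a harmless and accurate elaboration.
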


Note that, $f^{\infty}$ need not be the minimum of the convex problem~\eqref{eq:1}. It is easy to see this, by taking $r$ to be smaller 
than the rank of the optimal solution to problem~\eqref{eq:1}.

\subsubsection{A Closer Look at the Stationary Conditions}\label{sec:closer-look}
In this section we inspect the first order stationary conditions of the non-convex problem~\eqref{eq:mmmf} and the 
convex problem~\eqref{eq:1}. We will see that a first order stationary point of the convex problem~\eqref{eq:1} leads to 
factors $(A,B)$ that are stationary for problem~\eqref{eq:mmmf}. However, the converse of this statement need not be true in general. However, given an estimate delivered by 
\texttt{softImpute-ALS} (upon convergence) it is easy to verify whether it is a solution to problem~\eqref{eq:1}.

Note  that $Z^*$ is the optimal solution to the convex problem~\eqref{eq:1} iff:
$$ \partial H(Z^*) = P_{\Omega}(Z^* - X)  + \lambda \sgn( Z^* ) = 0,$$
where, $\sgn(Z^*)$ is a sub-gradient of the nuclear norm $\|Z\|_*$ at $Z^*$. 
Using the standard characterization~\cite{lewis-96} of $\sgn( Z^* )$ the above condition is equivalent to: 
\begin{equation}\label{stat-cvx-sub1}
P_{\Omega}(Z^* - X)  + \lambda U_{*}\sgn(D^{*})V^T_{*} = 0   
\end{equation}
where, the full SVD of  $Z^*$ is given by $U_{*}D_{*}V_{*}^T$;
 $\sgn(D^{*})$ is a diagonal matrix with $i$th diagonal entry given by 
$\sgn(d^*_{ii}),$ where, $d_{ii}^*$ is the $i$th diagonal entry of $D^*$.

If a limit point $A_*B_*^T$ of the \texttt{softImpute-ALS} algorithm
satisfies the stationarity condition~\eqref{stat-cvx-sub1} above, then it is the optimal solution of the convex problem.
We note that $A_*B_*^T$ need not necessarily satisfy the stationarity condition~\eqref{stat-cvx-sub1}.

$(A,B)$ satisfy the stationarity conditions of \texttt{softImpute-ALS} if the following conditions are satisfied:
$$P_{\Omega}(AB^T - X) B  + \lambda A = 0 , \;\;\;   A^T(P_{\Omega}(AB^T - X))  + \lambda B^T = 0,$$
where, we assume that $A,B$ are represented in terms of~\eqref{svd-hold-1}. This gives us:
\begin{equation}\label{stat-fast-als-1}
P_{\Omega}(AB^T - X) V  + \lambda U = 0 , \;\;\;   U^T(P_{\Omega}(AB^T - X))  + \lambda V^T = 0,
\end{equation}
where $AB^T = UDV^T$, being the reduced rank SVD i.e. all diagonal entries of $D$ are strictly positive. 

A stationary point of the convex problem corresponds to a stationary point of \texttt{softImpute-ALS}, as seen by a direct verification of the conditions above. 
In the following we investigate the converse: when does a stationary point of \texttt{softImpute-ALS} correspond to a stationary point of ~\eqref{eq:1}; i.e. condition~\eqref{stat-cvx-sub1}?
Towards this end, we make use of the ridged least-squares update used by \texttt{softImpute-ALS}.
Assume that all matrices $A_{k},B_{k}$ have $r$ rows.

At stationarity i.e. at a fixed point of \texttt{softImpute-ALS} we have the following:
\begin{align}
A_* &\in& \argmin_{A}& \;\; \half \|P_{\Omega}(X - AB^T_{*}) \|_F^2  + \frac{\lambda}{2} \left ( \|A\|_F^2 + \|B_*\|_F^2   \right) 
\label{stat-1-1-1} \\
&=& \argmin_{A}& \;\; \half \| \left ( P_{\Omega}(X) +  P^\perp_{\Omega}(A_*B^T_*) \right) - AB^T_* \|_F^2  + \frac{\lambda}{2} \left ( \|A\|_F^2 + \|B_*\|_F^2   \right) \label{stat-1-1-2} \\
B_* &\in& \argmin_{B}& \;\; \half \|P_{\Omega}(X - A_*B^T ) \|_F^2  + \frac{\lambda}{2} \left ( \|A_*\|_F^2 + \|B\|_F^2   \right)
\label{stat-1-2-1} \\
&=& \argmin_{B}& \;\; \half \| \left ( P_{\Omega}(X) +  P^\perp_{\Omega}(A_*B_*^T) \right) - A_*B^T \|_F^2  + \frac{\lambda}{2} \left ( \|A_*\|_F^2 + \|B\|_F^2   \right) \label{stat-1-2-2} 
\end{align}
Line~\eqref{stat-1-1-2} and~\eqref{stat-1-2-2} can be thought of doing alternating multiple ridge regressions for the fully observed matrix 
$P_{\Omega}(X) +  P^\perp_{\Omega}(A_*B^T_*)$. 

The above fixed point updates are very closely related to the following optimization problem:
\begin{equation}\label{stat-nuke-norm-ridging-1}
\mini_{A_{m\times r},B_{m \times r}} \;\; \half \| \left ( P_{\Omega}(X) +  P^\perp_{\Omega}(A_*B_*^T) \right) - AB \|_F^2  + \frac{\lambda}{2} \left ( \|A\|_F^2 + \|B\|_F^2   \right)  
\end{equation}
The solution to~\eqref{stat-nuke-norm-ridging-1} by Theorem~\ref{thm1} is given by the nuclear norm thresholding operation (with a rank $r$ constraint) on the matrix $P_{\Omega}(X) +  P^\perp_{\Omega}(A_*B^T_*)$:
\begin{equation}
  \label{eq:rrsoftimpute}
\mini_{Z: \rnk(Z) \leq r} \; \half \| \left ( P_{\Omega}(X) +  P^\perp_{\Omega}(A_*B_*^T) \right) - Z \|_F^2 +  \frac{\lambda}{2} \| Z\|_{*}.
\end{equation}

Suppose the convex optimization problem~\eqref{eq:1} has a solution $Z^*$ with $\mathrm{rank}(Z^*)= r^*$. Then, for 
$A_*B_*^T$ to be a solution to the convex problem the following conditions are sufficient:
\begin{enumerate}
\item[(a)] $r^* \leq r$
\item[(b)] $A_*, B_*$ must be the global minimum of problem~\eqref{stat-nuke-norm-ridging-1}. 
Equivalently, the outer product 
$A_*B_*^T$ must be the solution to the fully observed nuclear norm problem:
\begin{equation}\label{equi-cond-1}
A_*B_*^T  \in \argmin_{Z}\;\;  \half \| \left ( P_{\Omega}(X) +  P^\perp_{\Omega}(A_*B_*^T) \right) - Z \|_F^2  +  \lambda\|Z\|_*
\end{equation}
\end{enumerate}
The above condition~\eqref{equi-cond-1} can be verified and requires doing a low-rank SVD of the matrix 
$P_{\Omega}(X) +  P^\perp_{\Omega}(A_*B_*^T)$ and is a direct application of Algorithm~\ref{alg:rrssvd}.
This task is computationally attractive due to the sparse plus low-rank structure of the matrix.

\begin{thm}\label{equivalence-1}
Let $A_{k} \in \Re^{m \times r}, B_{k} \in \Re^{n \times r}$ be the sequence generated by {\em \texttt{softImpute-ALS}} and let $(A_*,B_*)$  denote a limit point of the sequence. 
Suppose the Problem~\eqref{eq:1} has a minimizer with rank at most $r$.
If  $Z_* = A_*B_*^T$ solves the problem~\eqref{equi-cond-1} then $Z_*$ is a solution to the convex Problem~\eqref{eq:1}.
\end{thm}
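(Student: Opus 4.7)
The plan is to verify that the first-order (subgradient) optimality condition for the \emph{fully observed} nuclear-norm problem~\eqref{equi-cond-1} at $Z_*=A_*B_*^T$ collapses to the first-order optimality condition for the \emph{partially observed} convex problem~\eqref{eq:1} at $M=Z_*$. Since $H(M)$ in~\eqref{eq:1} is convex, matching the subgradient condition certifies global optimality, so this is the most direct route.

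First I would write the KKT condition for $Z_*$ being a minimizer of~\eqref{equi-cond-1}. With $Y:=P_\Omega(X)+P_\Omega^\perp(A_*B_*^T)$, the optimality condition reads
\begin{equation*}
-(Y-Z_*) + \lambda W = 0, \qquad W \in \partial \|Z_*\|_*.
\end{equation*}
The key calculation is then to simplify the residual $Y-Z_*$. Substituting $Z_*=A_*B_*^T$ gives
\begin{equation*}
Y - Z_* \;=\; P_\Omega(X) + P_\Omega^\perp(A_*B_*^T) - A_*B_*^T \;=\; P_\Omega(X) - P_\Omega(A_*B_*^T) \;=\; P_\Omega(X - Z_*),
\end{equation*}
using only $P_\Omega + P_\Omega^\perp = I$. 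Plugging this back yields
\begin{equation*}
P_\Omega(Z_* - X) + \lambda W = 0, \qquad W \in \partial \|Z_*\|_*,
\end{equation*}
which is \emph{exactly} the subgradient optimality condition for $H(M)=\tfrac12\|P_\Omega(X-M)\|_F^2+\lambda\|M\|_*$ at $M=Z_*$ (the quadratic part has gradient $-P_\Omega(X-M)$, and the nuclear-norm subdifferential is unchanged).

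By convexity of $H$, this condition is sufficient for $Z_*$ to be a global minimizer of~\eqref{eq:1}, which is the desired conclusion. The hypothesis that~\eqref{eq:1} admits a solution of rank at most $r$ plays only a contextual role—it guarantees that the rank-$r$ parameterization $A\in\real^{m\times r},\,B\in\real^{n\times r}$ can at least in principle realize an optimum of~\eqref{eq:1}; the subgradient-translation argument itself uses nothing more than the optimality of $Z_*$ in~\eqref{equi-cond-1} and the identity $P_\Omega^\perp(Z_*)-Z_*=-P_\Omega(Z_*)$. There is no real obstacle in this proof: the only step that requires care is verifying that the subgradient $W$ produced by~\eqref{equi-cond-1} is a bona fide element of $\partial\|Z_*\|_*$, which is automatic because~\eqref{equi-cond-1} is an unconstrained convex problem and we assumed $Z_*$ solves it.
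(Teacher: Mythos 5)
Your proof is correct and takes essentially the same route as the paper, whose (informal) argument in the discussion preceding the theorem is exactly this subgradient matching: the optimality condition $P_\Omega(Z_*-X)+\lambda\,\sgn(Z_*)=0$ for problem~\eqref{eq:1} is recovered from the optimality condition of the fully observed problem~\eqref{equi-cond-1} via the identity $P_\Omega(X)+P_\Omega^\perp(Z_*)-Z_*=P_\Omega(X-Z_*)$, and convexity of $H$ then certifies global optimality. Your remark that the rank-$r$ hypothesis is only contextual (it is automatically satisfied once the rank-$\le r$ matrix $Z_*=A_*B_*^T$ is shown optimal) is also accurate and consistent with the paper's treatment.
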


\subsection{Computational Complexity and Comparison to ALS}
The computational cost of \texttt{softImpute-ALS} can be broken down into three steps. First consider only the cost of the update to $A$. The first step is forming the matrix $X^*=P_\Omega (X - AB^T) + AB^T$, which requires $O(r|\Omega|)$ flops for the $P_\Omega (AB^T)$ part, while the second part is never explicitly formed. The matrix $B(B^TB +\lambda I)^{-1}$ requires $O(2nr^2+r^3)$ flops to form; although we keep it in SVD factored form, the coast is the same. The multiplication $X^* B(B^TB +\lambda I)^{-1}$ requires $O(r|\Omega| +mr^2 +nr^2)$ flops, using the sparse plus low-rank structure of $X^*$. The total cost of an iteration is $O(2r |\Omega|+mr^2+3nr^2 +r^3)$. 

As mentioned in Section~\ref{sec:introduction},  alternating least squares (ALS)  is a  popular algorithm for solving the matrix factorization problem in Equation~\eqref{eq:mmmf}; see Algorithm~\ref{alg:regular-als}. The \texttt{ALS} algorithm is an instance of block coordinate descent applied to \eqref{eq:mmmf}. 

The updates for ALS are given by 
\begin{align}
A_{k+1} =\argmin_A F(A,B_k)\\
B_{k+1}=\argmin_B F(A_k,B),
\end{align}
and each row of $A$ and $B$ can be computed via a separate ridge regression. The cost for each ridge regression is $O(|\Omega_j| r^2 +r^3)$, so the cost of one iteration is $O(2|\Omega |r^2 +mr^3 +nr^3)$. Hence the cost of one iteration of \texttt{ALS} is $r$ times more flops than one iteration of \texttt{softImpute-ALS}.
\begin{algorithm}
\caption{\textbf{Alternating least squares \texttt{ALS}}}
\textbf{Inputs:}
Data matrix $X$, initial iterates $A_0$ and $B_0$, and $k=0$.\\
\textbf{Outputs:}
$(A^*, B^*) = \argmin_{A,B}\ F(A,B)$\\
\textbf{Repeat until Convergence}
\begin{algorithmic}
\FOR{i=1 to m}
 \STATE{$A_{i} \gets \left(\sum_{j \in \Omega_i} B_j B_j ^T\right)^{-1} \left( \sum_{j\in \Omega_i} X_{ij} B_j\right)$   } 
 \ENDFOR
 \FOR{j=1 to n}
 \STATE{$B_{j} \gets \left(\sum_{i \in \Omega_j} A_i A_i ^T\right)^{-1} \left( \sum_{i\in \Omega_j} X_{ij} A_i\right)$   } 
 \ENDFOR
\end{algorithmic}
\label{alg:regular-als}
\end{algorithm}
We will see in the next sections that while \texttt{ALS} may decrease the criterion at each iteration more than \texttt{softImpute=ALS}, it tends to be slower because the cost is higher by a factor~$O(r)$.

\section{Experiments}
In this section we run some timing experiments on simulated and real datasets, and show performance results on the Netflix and MovieLens data.

\subsection{Timing experiments}
\label{sec:timing}
Figure~\ref{fig:timing} shows timing results on four datasets. The first three are simulation datasets of increasing size, and the last is the publicly available {\tt MovieLens 100K} data. These experiments were all run in {\tt R} using the {\tt softImpute} package; see Section~\ref{sec:r-package-tt}. Three methods are compared:
\begin{enumerate}
\item {\tt ALS}--- Alternating Least Squares as in Algorithm~\ref{alg:regular-als};
\item {\tt softImpute-ALS} --- our new approach, as defined in Algorithm~\ref{alg:rrsimpute} or \ref{alg:fast-als};
\item {\tt softImpute} --- the original algorithm of \citet{mazumder09:_spect_regul_algor_for_learn}, as layed out in steps~(\ref{eq:xhatstep})--(\ref{eq:softSVD2}).
\end{enumerate}
\begin{figure}[hbtp]
  \centerline{
    \includegraphics[width=.5\textwidth]{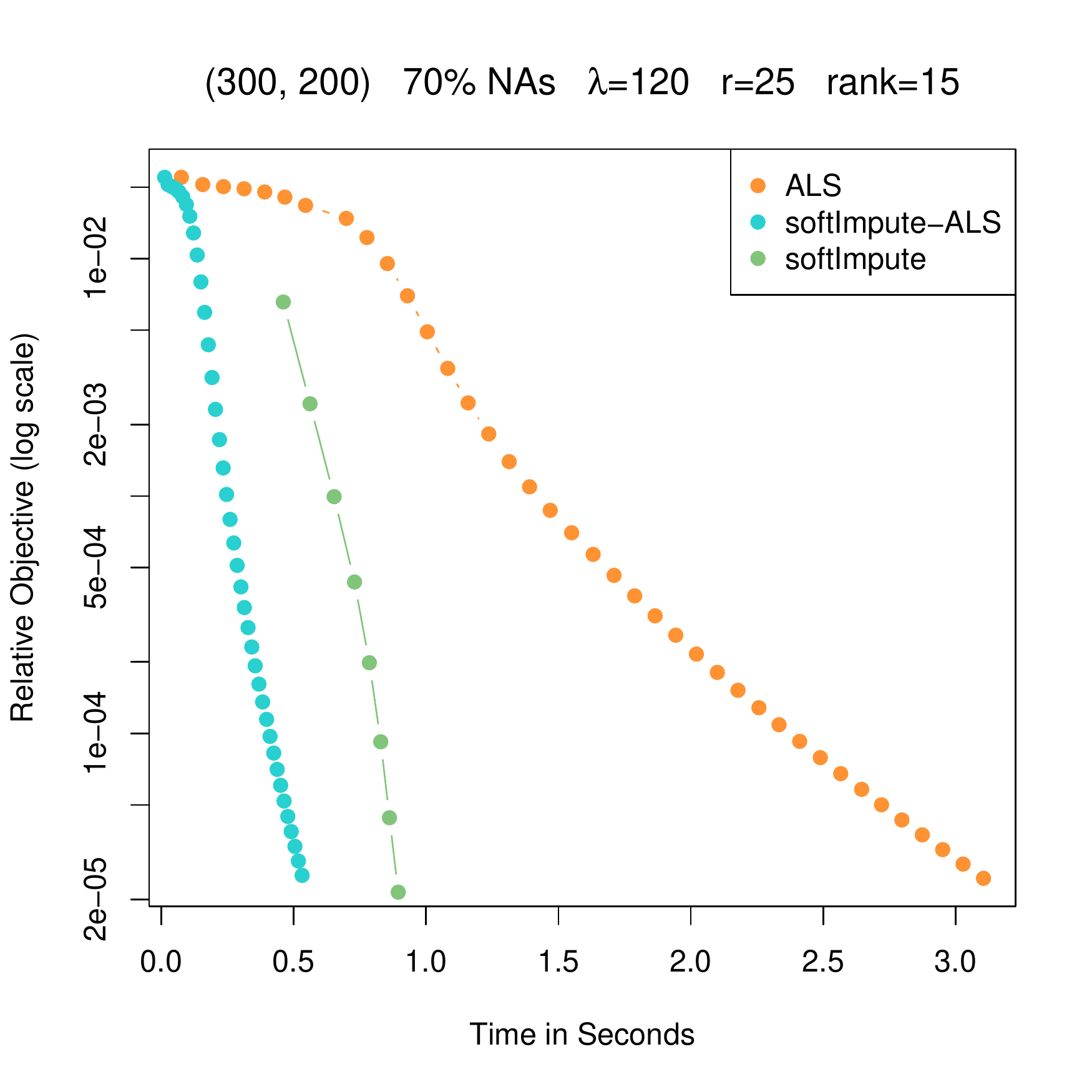}
    \includegraphics[width=.5\textwidth]{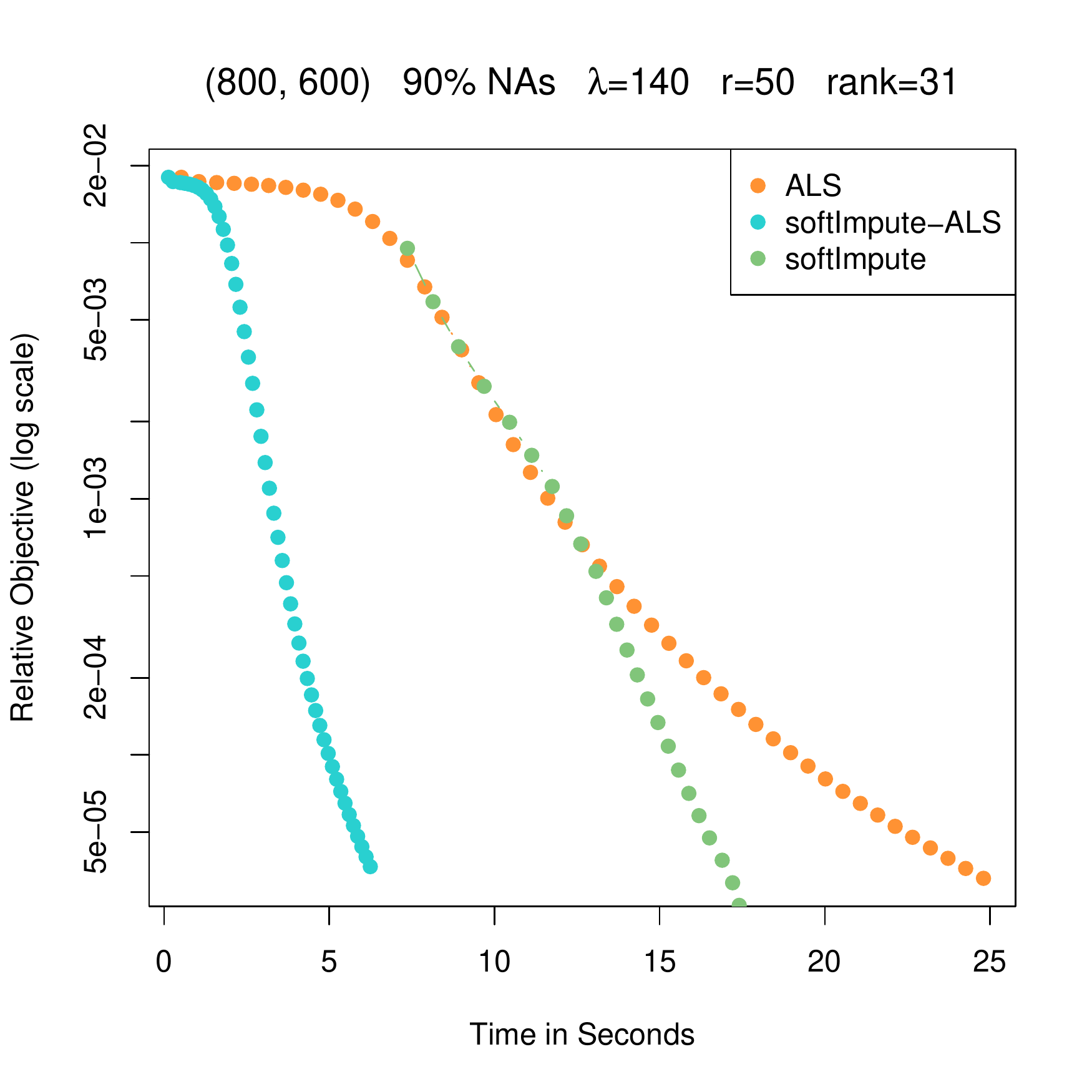}
}
 \centerline{
    \includegraphics[width=.5\textwidth]{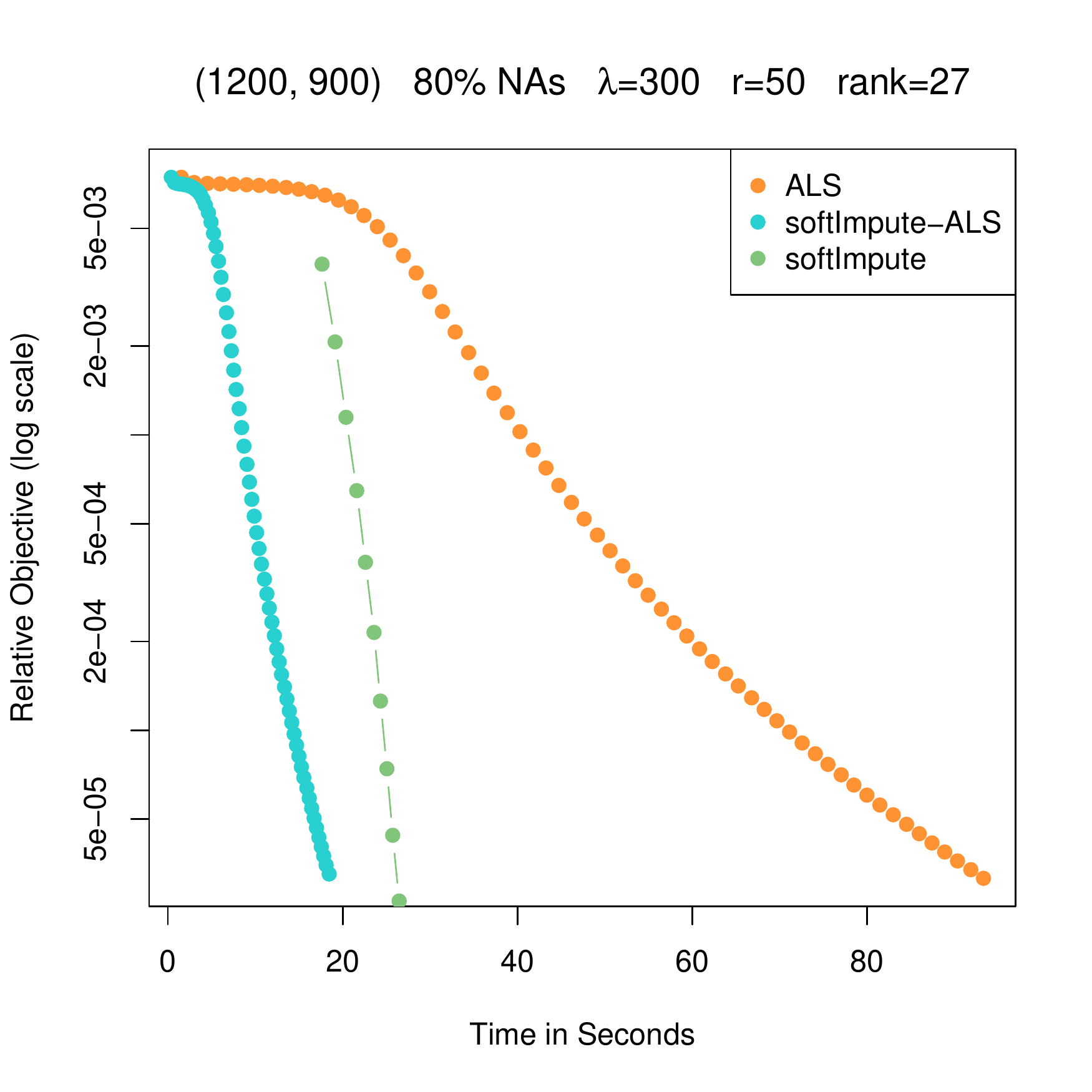}
    \includegraphics[width=.5\textwidth]{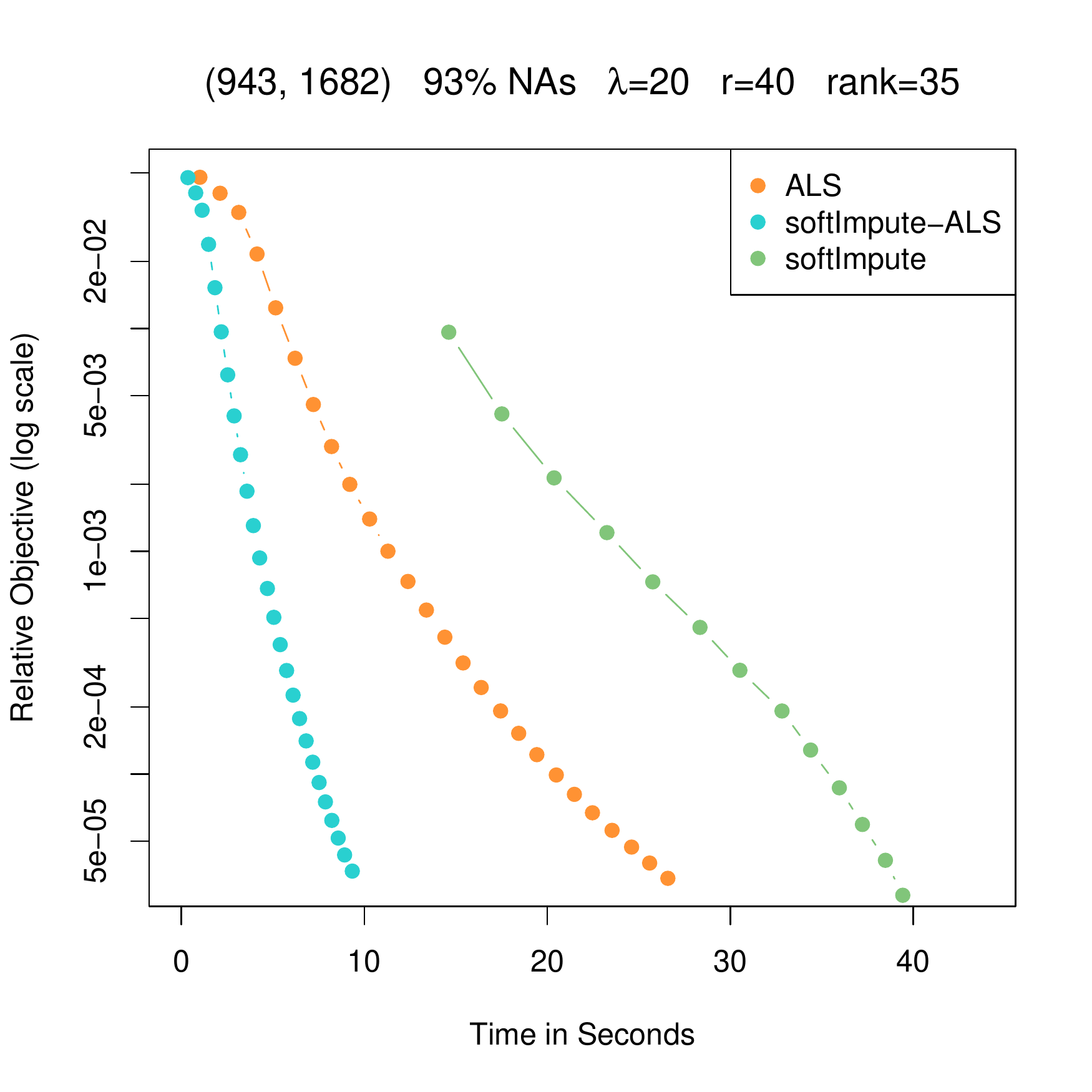}
}
\caption{Four timing experiments. Each figure is labelled according to size ($m\times n$), percentage of missing entries (NAs), value of $\lambda$ used, rank $r$ used in the ALS iterations, and rank of solution found. The first three are simulation examples, with increasing dimension. The last is the {\tt movielens 100K} data. In all cases, {\tt softImpute-ALS} (blue) wins handily against {\tt ALS} (orange) and {\tt softImpute} (green).} 
\label{fig:timing}
\end{figure}

We used an {\tt R} implementation for each of these in order to make
the fairest comparisons. In particular, algorithm {\tt softImpute}
requires a low-rank SVD of a complete matrix at each iteration. For
this we used the function {\tt svd.als} from our package, which uses
alternating subspace iterations, rather than using other optimized
code that is available for this task. Likewise, there exists optimized
code for regular {\tt ALS} for matrix completion, but instead we used
our \texttt{R} version to make the comparisons fairer. We are trying to
determine how the computational trade-offs play off, and thus need a
level playing field.

Each subplot in Figure~\ref{sec:timing} is labeled according to the
size of the problem, the fraction missing, the value of $\lambda$
used, the operating rank of the algorithms $r$, and the rank of the
solution obtained. All three methods involve alternating subspace
methods; the first two are alternating ridge regressions, and the
third alternating orthogonal regressions. These are conducted at the
operating rank $r$, anticipating a solution of smaller rank. Upon
convergence, {\tt softImpute-ALS} performs step~(5) in
Algorithm~\ref{alg:rrsimpute}, which can truncate the rank of the
solution. Our implementation of {\tt ALS} does the same.

For the three simulation examples, the data are generated from an underlying Gaussian factor model, with true ranks 50, 100, 100; the missing entries are then chosen at random. Their sizes are $(300,200)$, $(800,600)$ and $(1200,900)$ respectively, with between 70--90\% missing. The {\tt MovieLens 100K} data has 100K ratings (1--5) for 943 users and 1682 movies, and hence is 93\% missing.

We picked a value of $\lambda$ for each of these examples (through trial and error) so that the final solution had rank less than the operating rank. Under these circumstances, the solution to the criterion (\ref{eq:mmmf}) coincides with the solution to (\ref{eq:1}), which is unique under non-degenerate situations.

There is a fairly consistent message from each of these experiments. {\tt softImpute-ALS} wins handily in each case, and the reasons are clear:
\begin{itemize}
\item Even though it uses more iterations than {\tt ALS}, they are much cheaper to execute (by a factor $O(r)$).
\item {\tt softImpute} wastes time on its early SVD, even though it is far from the solution. Thereafter it uses warm starts for its SVD calculations, which speeds each step up, but it does not catch up.
\end{itemize}

\subsection{Netflix Competition Data}
\label{sec:netflix}
We used our {\tt softImpute} package  in {\tt R} to fit a sequence of models on the Netflix competition data. Here there are 480,189 users, 17,770 movies and a total of 100,480,507 ratings, making the resulting matrix 98.8\% missing. There is a designated test set (the ``probe set''), a subset of 1,408,395 of the these ratings, leaving 99,072,112 for training.
\begin{figure}[hbt]
  \centering
\includegraphics[width=\textwidth]{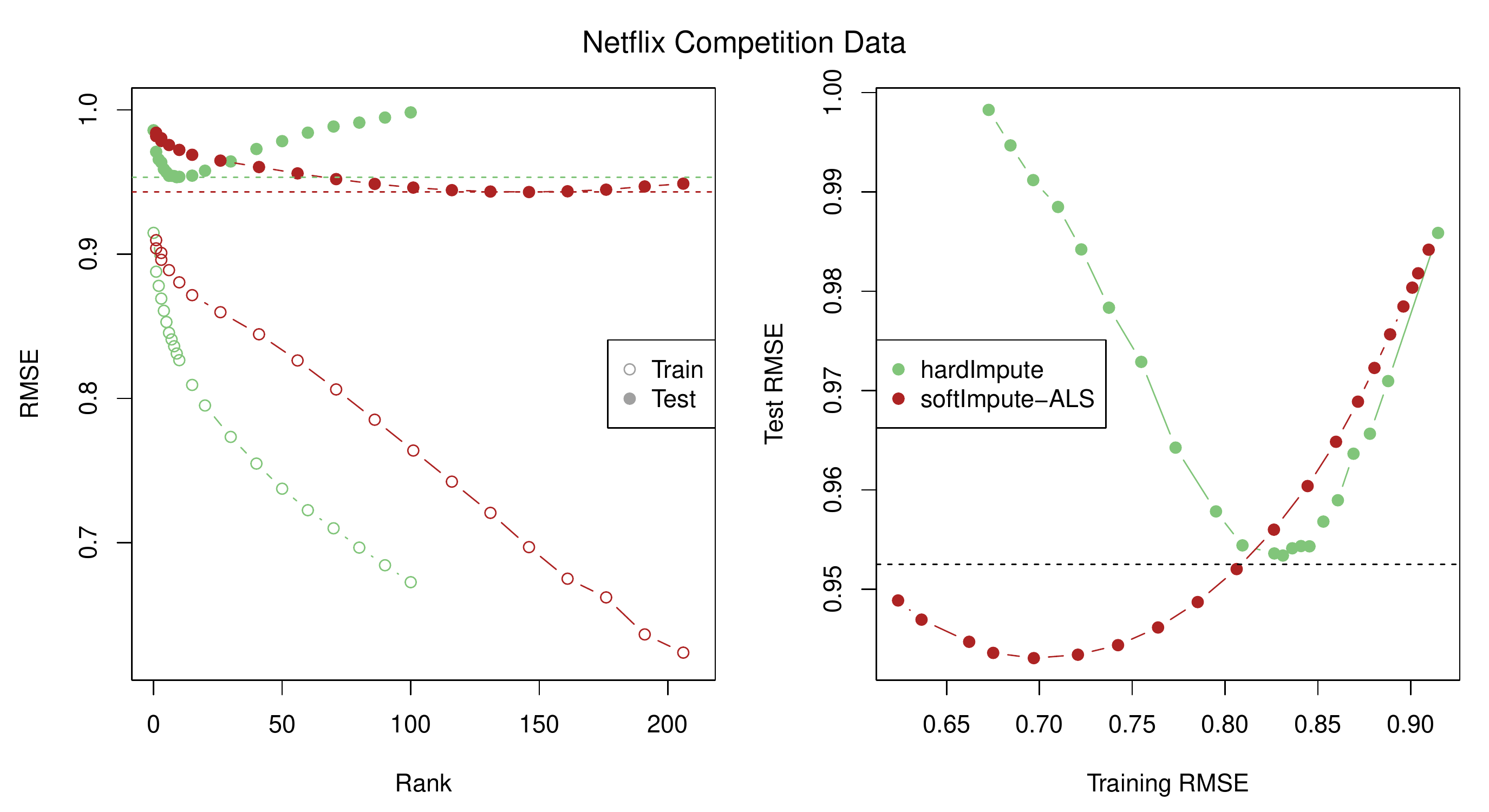}
\caption{Performance of {\tt hardImpute} versus {\tt softImpute-ALS} on the Netflix data. {\tt hardImpute} uses a rank-restricted SVD at each step of the imputation, while {\tt softImpute-ALS} does shrinking as well. The left panel shows the training and test error as a function of the rank of the solution---an imperfect calibration in light of the shrinkage. The right panel gives the test error as a function of the training error. {\tt hardImpute} fits more aggressively, and overfits far sooner than {\tt softImpute-ALS}. The horizontal dotted line is the ``Cinematch'' score, the target to beat in this competition.}
  \label{fig:netflix}
\end{figure}

Figure~\ref{fig:netflix} compares the performance of {\tt hardImpute} \cite{mazumder09:_spect_regul_algor_for_learn} with {\tt softImpute-ALS} on these data. {\tt hardImpute} uses rank-restricted SVDs iteratively to estimate the missing data, similar to {\tt softImpute} but without shrinkage. The shrinkage helps here, leading to a best test-set RMSE of 0.943. This is a 1\% improvement over the ``Cinematch'' score, somewhat short of the prize-winning improvement of 10\%.

Both methods benefit greatly from using warm starts. {\tt hardImpute} is solving a non-convex problem, while the intention is for  \texttt{softImpute-ALS} to solve the convex problem (\ref{eq:1}). This will be achieved if the operating rank is sufficiently large. The idea is to decide on a decreasing sequence of values for $\lambda$, starting from $\lambda_{max}$ (the smallest value for which the solution $\widehat M=0$, which corresponds to the largest singular value of $P_\Omega(X)$). 
Then for each value of $\lambda$, use an operating rank somewhat larger than the rank of the previous solution, with the goal of getting the solution rank smaller than the operating rank. The sequence of twenty models took under six hours of computing on a Linux cluster with 300Gb of ram (with a fairly liberal relative convergence criterion of 0.001), using the \texttt{softImpute} package in \texttt{R}.

\begin{figure}[hbtp]
  \centering
  \includegraphics[width=\textwidth]{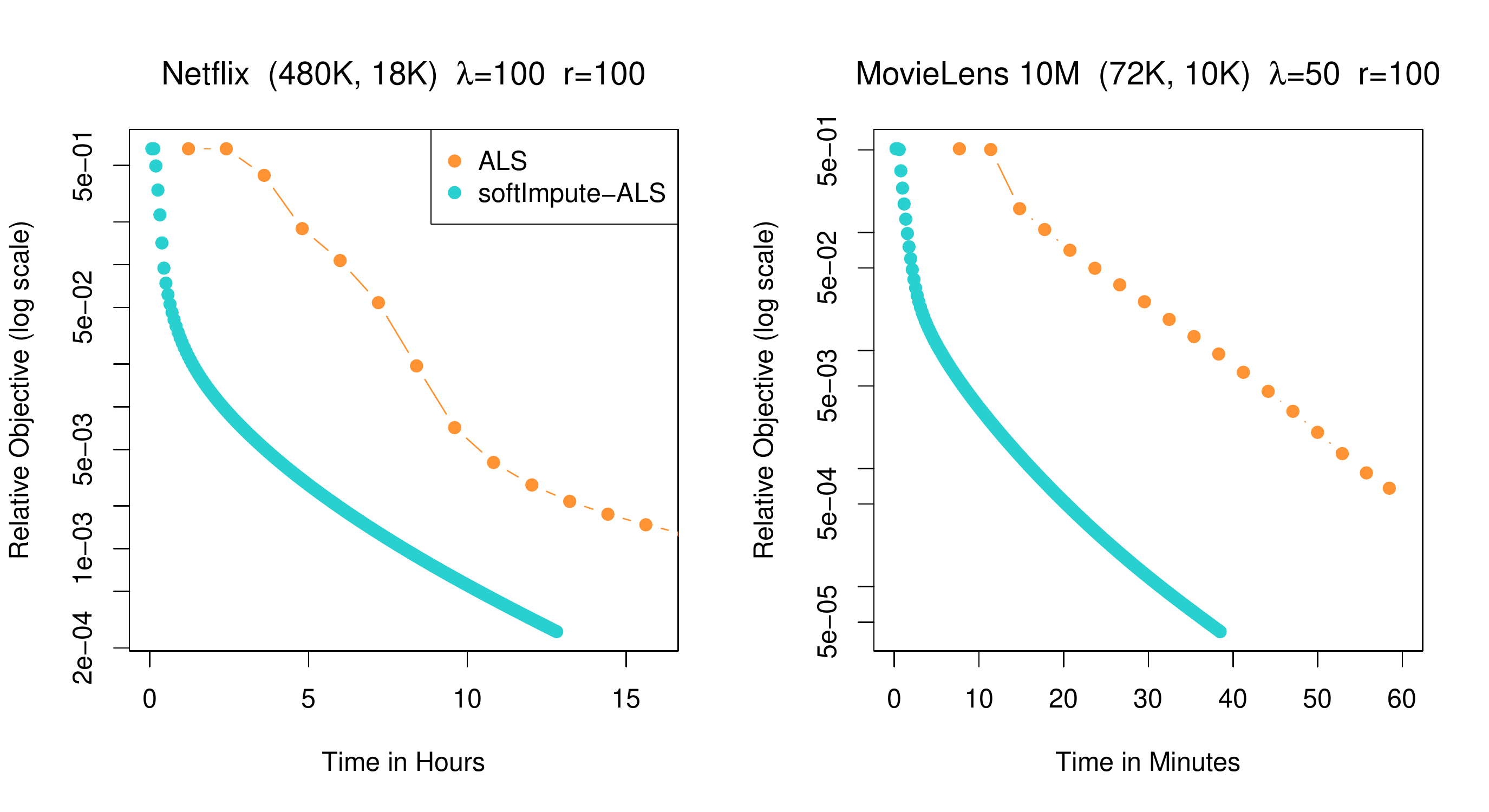}
\caption{Left: timing results on the Netflix matrix, comparing {\tt ALS} with {\tt softImpute-ALS}. Right: timing on the MovieLens 10M matrix.  In both cases we see that while {\tt ALS} makes bigger gains per iteration, each iteration is much more costly.}
\label{fig:movielensnetflix}
\end{figure}
Figure~\ref{fig:movielensnetflix} (left panel) gives timing comparison results for one of the Netflix fits, this time implemented in Matlab. The right panel gives timing results on the smaller MovieLens 10M matrix. In these applications we need not get a very accurate solution, and so early stopping is an attractive option. {\tt softImpute-ALS} reaches a solution close to the minimum in about 1/4 the time it takes {\tt ALS}.

\section{R Package {\tt softImpute}}\label{sec:r-package-tt}
We have developed an {\tt R} package {\tt softImpute} for fitting
these models \citep{softImpute_package}, which is available on
CRAN. The package implements both {\tt softImpute} and {\tt
  softImpute-ALS}. It can accommodate large matrices if the number of
missing entries is correspondingly large, by making use of
sparse-matrix formats. There are functions for centering and scaling
(see Section~\ref{centerscale}), and for making predictions from a
fitted model. The package also has a function \texttt{svd.als} for
computing a low-rank SVD of a large sparse matrix, with row and/or
column centering. More details can be found in the package Vignette on
the first authors webpage, at \\{\small\tt
  http://web.stanford.edu/~hastie/swData/softImpute/vignette.html}.

\section{Distributed Implementation}
\label{sec:distr-impl}
\subsection{Design}
We provide a distributed version of \texttt{softimpute-ALS} (given in Algorithm \ref{alg:fast-als}), built upon the Spark cluster programming framework. The input matrix to be factored is split row-by-row across many machines. The transpose of the input is also split row-by-row across the machines. The current model (i.e. the current guess for $A, B$) is repeated and held in memory on every machine. Thus the total time taken by the computation is proportional to the number of non-zeros divided by the number of CPU cores, with the restriction that the model should fit in memory.

At every iteration, the current model is broadcast to all machines, such that there is only one copy of the model on each machine. Each CPU core on a machine will process a partition of the input matrix, using the local copy of the model available. This means that even though one machine can have many cores acting on a subset of the input data, all those cores can share the same local copy of the model, thus saving RAM. This saving is especially pronounced on machines with many cores.

The implementation is available online at \url{http://git.io/sparkfastals} with documentation, in Scala. The implementation has a method named \texttt{multByXstar}, corresponding to line 3 of Algorithm \ref{alg:fast-als} which multiplies  $X^*$ by another matrix on the right, exploiting the ``sparse-plus-low-rank'' structure of $X^*$. This  method has signature:
\begin{center}
\texttt{multByXstar(X: IndexedRowMatrix, A: BDM[Double], B: BDM[Double], C: BDM[Double])}
\end{center}
This method has four parameters. The first parameter \texttt{X} is a distributed matrix consisting of the input, split row-wise across machines. The full documentation
for how this matrix is spread across machines is available online\footnote{\url{https://spark.apache.org/docs/latest/mllib-basics.html#indexedrowmatrix}}. The \texttt{multByXstar} method takes a distributed matrix, along with local matrices \texttt{A}, \texttt{B}, and \texttt{C}, and performs line 3 of Algorithm \ref{alg:fast-als} by multiplying $X^*$ by \texttt{C}. Similarly, the method \texttt{multByXstarTranspose} performs line 5 of Algorithm \ref{alg:fast-als}. 

After each call to \texttt{multByXstar}, the machines each will have calculated a portion of $A$. Once the call finishes, the machines each send their
computed portion (which is small and can fit in memory on a single machine, since $A$ can fit in memory on a single machine) to the master node, which will assemble the new guess for $A$ and broadcast it to the worker machines. A similar process happens for \texttt{multByXstarTranspose}, and the whole process is repeated every iteration.

\subsection{Experiments}
We report iteration times using an Amazon EC2 cluster with 10 slaves and one master, of instance type ``\texttt{c3.4xlarge}". Each machine has 16 CPU cores and 30 GB of RAM. We ran \texttt{softimpute-ALS} on matrices of varying sizes with iteration runtimes available in Table \ref{sparktable}, setting $k=5$. Where possible, hardware acceleration was used for local linear algebraic operations, via breeze and BLAS.

The popular Netflix prize matrix  has $17,770$ rows, $480,189$ columns, and $100,480,507$ non-zeros. We report results on several larger matrices in Table~\ref{sparktable}, up to 10 times larger.

\begin{table}[hbt]
\begin{center}
  \begin{tabular}{ | c | c | c | }
    \hline
    \textbf{Matrix Size} & \textbf{Number of Nonzeros} & \textbf{Time per iteration (s)} \\ \hline
    $10^6 \times 10^6$  & $10^6$ & 5 \\ \hline
    $10^6 \times 10^6$  & $10^9$ & 6 \\ \hline
    $10^7 \times 10^7$  & $10^{9}$ &  139\\ \hline
  \end{tabular}
  \caption[Distributed \texttt{softimpute-ALS}]{Running times for distributed \texttt{softimpute-ALS}}
   \label{sparktable}
\end{center}
\end{table}


\section{Centering and Scaling}

\label{centerscale}
We often want to remove row and/or column means from a matrix before performing a low-rank SVD or running our matrix completion algorithms. Likewise we may wish to standardize the rows and or columns to have unit variance. In this section we 
present an algorithm for doing this, in a way that is sensitive to the storage requirement of very large, sparse matrices. We first present our approach, and then discuss implementation details.

We have a two-dimensional array $X=\{X_{ij}\}\in \real^{m\times n}$, with pairs
$(i,j)\in\Omega$ observed and the rest missing. The goal is to
standardize the rows and columns of $X$ to mean zero and variance~one
simultaneously.  We consider the mean/variance model
\begin{equation}
  \label{eq:one}
  X_{ij}\sim (\mu_{ij},\sigma^2_{ij})
\end{equation}
with 
\begin{eqnarray}
  \mu_{ij}&=&\alpha_i+\beta_j;  \label{eq:twoa}\\
\sigma_{ij}&=&\tau_i\gamma_j.  \label{eq:twob}
\end{eqnarray}
Given the parameters of this model, we would standardized each observation via
\begin{eqnarray}
  \tilde X_{ij}&=&\frac {X_{ij}-\mu_{ij}}{\sigma_{ij}}\nonumber\\
&=& \frac {X_{ij}-\alpha_i-\beta_j}{\tau_i\gamma_j}.  \label{eq:three}
\end{eqnarray}

If model~(\ref{eq:one}) were correct, then each entry of the
standardized matrix, viewed as a realization of a random variable,
would have population mean/variance $(0,1)$. A consequence would be
that realized rows and columns would also have means and variances
with expected values zero and one respectively. However, we would like the observed
data to have these row and column properties.


Our representation (\ref{eq:twoa})--(\ref{eq:twob}) is not unique, but is easily fixed to be so. We can include
a constant $\mu_0$ in (\ref{eq:twoa}) and then have $\alpha_i$ and
$\beta_j$ average 0. Likewise, we can have an overall scaling
$\sigma_0$, and then have $\log\tau_i$ and $\log\gamma_j$ average
0. Since this is not an issue for us, we suppress this refinement.

We are not the first to attempt this dual centering and
scaling. Indeed, \citet{olshen10:_succes_normal_of_rectan_array}
implement a very similar algorithm for complete data, and discuss
convergence issues. Our algorithm differs in two simple ways: it
allows for missing data, and it learns the parameters of the
centering/scaling model (\ref{eq:three}) (rather than just applying
them).  This latter feature will be important for us in our
matrix-completion applications; once we have estimated the missing
entries in the standardized matrix $\widetilde X$, we will want to
{\em reverse} the centering and scaling on our predictions.

In matrix notation we can write our model
\begin{equation}
  \label{eq:matrix}
  \widetilde{\bX}={\bf D}_\tau^{-1}({\bf X}-{\boldsymbol \alpha}{\bf 1}^T-{\bf 1}{\boldsymbol \beta}^T){\bf D}_\gamma^{-1},
\end{equation}
where ${\bf D}_\tau=\mbox{diag}(\tau_1,\tau_2,\ldots,\tau_m)$, similar for ${\bf D}_\gamma$, and the missing values are represented in the full matrix as {\tt NA}s (e.g. as in {\tt R}). Although it is not the focus of this paper, this centering model is also useful for large, complete, sparse matrices $\bX$ (with many zeros, stored in sparse-matrix format). Centering would destroy the sparsity, but from (\ref{eq:matrix}) we can see we can store it in ``sparse-plus-low-rank'' format. Such a matrix can be left and right-multiplied easily, and hence is ideal for alternating subspace methods for computing a low-rank SVD. The function \texttt{svd.als} in the \texttt{softImpute} package (section~\ref{sec:r-package-tt}) can accommodate such structure.

\subsection{Method-of-moments Algorithm}\label{sec:meth-moments-algor}
We now present an algorithm for estimating the parameters.
The idea is to write down four systems of estimating equations
that demand that the transformed observed data have row
means zero and variances one, and likewise for the columns. We then iteratively solve these equations,
until all four conditions are satisfied simultaneously. We do not in general have
any guarantees that this algorithm will always converge except in the noted special cases, but
empirically we typically see rapid convergence.

Consider the estimating equation for the row-means condition (for each row~$i$)
\begin{eqnarray}
  \label{eq:rowmean}
  \frac1{n_i}\sum_{j\in \Omega_i}\widetilde X_{ij}&=&\frac1{n_i}\sum_{j\in \Omega_i}\frac{X_{ij}-\alpha_i-\beta_j}{\tau_i\gamma_j}\\
&=&0\nonumber,
\end{eqnarray}
where $\Omega_i=\{j| (i,j)\in \Omega\}$, and $n_i=|\Omega_i|\leq n$.
Rearranging we get
\begin{equation}
  \label{eq:alpha}
  \alpha_i=\frac{\sum_{j\in\Omega_i}\frac1{\gamma_j}(X_{ij}-\beta_j)}{\sum_{j\in\Omega_i}\frac1{\gamma_j}},\quad i=1,\ldots,m.
\end{equation}
This is a weighted mean of the partial residuals $X_{ij}-\beta_j$ with weights inversely proportional to the
column standard-deviation parameters $\gamma_j$. By symmetry, we get a similar equation for $\beta_j$,
\begin{equation}
  \label{eq:beta}
  \beta_j=\frac{\sum_{i\in\Omega^j}\frac1{\tau_i}(X_{ij}-\alpha_i)}{\sum_{i\in\Omega^j}\frac1{\tau_i}},\quad j=1,\ldots,n,
\end{equation}
where $\Omega^j=\{i| (i,j)\in \Omega\}$, and $m_j=|\Omega^j|\leq m$.

Similarly, the variance conditions for the rows are
\begin{eqnarray}
  \label{eq:tau}
  \frac1{n_i}\sum_{j\in \Omega_i} \widetilde X_{ij}^2&=&  \frac1{n_i}\sum_{j\in \Omega_i}\frac{(X_{ij}-\alpha_i-\beta_j)^2}{\tau_i^2\gamma_j^2}\\ 
&=&1,\nonumber
\end{eqnarray}
which simply says
\begin{equation}
  \label{eq:tau2}
  \tau_i^2= \frac1{n_i}\sum_{j\in \Omega_i}\frac{(X_{ij}-\alpha_i-\beta_j)^2}{\gamma_j^2},\quad i=1,\ldots,m.
\end{equation}
Likewise 
\begin{equation}
  \label{eq:gamma}
    \gamma_j^2= \frac1{m_j}\sum_{i\in \Omega^j}\frac{(X_{ij}-\alpha_i-\beta_j)^2}{\tau_i^2},\quad j=1,\ldots,n.
\end{equation}
The method-of-moments estimators require iterating these four sets of equations (\ref{eq:alpha}), (\ref{eq:beta}), (\ref{eq:tau2}), (\ref{eq:gamma}) till convergence. We monitor the following functions of the  ``residuals''
\begin{eqnarray}
  \label{eq:obj}
  R &=& 
\sum_{i=1}^m\left[\frac1{n_i}\sum_{j\in \Omega_i}\widetilde X_{ij}\right]^2
+
\sum_{j=1}^n\left[\frac1{m_j}\sum_{i\in \Omega^j}\widetilde X_{ij}\right]^2 \\
&&\quad 
+
\sum_{i=1}^m\log^2\left(\frac1{n_i}\sum_{j\in \Omega_i}\widetilde X_{ij}^2\right)
+
\sum_{j=1}^n\log^2\left(\frac1{m_j}\sum_{i\in \Omega^j}\widetilde X_{ij}^2\right)
\end{eqnarray}
In experiments it appears that $R$ converges to zero very fast, perhaps linear convergence.
In Appendix~\ref{sec:moma} we show slightly different versions of these estimators which are more suitable for sparse-matrix calculations.

In practice we may not wish to apply all four standardizations, but instead a subset. For example, we may wish to only standardize columns to have mean zero and variance one.
In this case we simply set the omitted centering parameters to zero, and scaling parameters to one, and skip their steps in the iterative algorithm.
In certain cases we have convergence guarantees:
\begin{itemize}
\item Column-only  centering and/or scaling. Here no iteration is required; the centering step precedes the scaling step, and we are done. Likewise for row-only.
\item Centering only, no scaling. Here the situation is exactly that of an unbalanced two-way ANOVA, and our algorithm is exactly the Gauss-Seidel algorithm for fitting the two-way ANOVA model. This is known to converge, modulo certain degenerate situations.
\end{itemize}
For the other cases we have no guarantees of convergence.

We present an alternative sequence of formulas in Appendix~\ref{sec:moma} which allows one to simultaneously apply the transformations, and learn the parameters.

\section{Discussion}
\label{discuss}
We have presented a new algorithm for matrix completion, suitable for solving (\ref{eq:1}) for very large problems, as long as the solution rank is manageably low. Our algorithm capitalizes on the a different weakness in each of the popular alternatives:
\begin{itemize}
\item \texttt{ALS} has to solve a different regression problem for every row/column, because of their different amount of missingness, and this can be costly. \texttt{softImpute-ALS} solves a single regression problem once and simultaneously for all the rows/columns, because it operates on a filled-in matrix which is complete. Although these steps are typically not as strong as those of \texttt{ALS}, the speed advantage more than compensates.
\item \texttt{softImpute} wastes time in early iterations computing a low-rank SVD of a far-from-optimal estimate, in order to make its next imputation. One can think of \texttt{softImpute-ALS} as simultaneously filling in the matrix at each alternating step,  as it is computing the SVD. By the time it is done, it has the the solution sought by \texttt{softImpute}, but with far fewer iterations.
\end{itemize}
\texttt{softImpute} allows for an extremely efficient distributed implementation (section~\ref{sec:distr-impl}), and hence can scale to large problems, given a sufficiently large computing infrastructure.

\subsubsection*{Acknowledgements}
The authors thank Balasubramanian Narasimhan for helpful discussions on distributed computing in R. The first author thanks Andreas Buja and Stephen Boyd for stimulating ``footnote'' discussions that led to the centering/scaling in Section~\ref{centerscale}.
Trevor Hastie was partially supported by grant  DMS-1407548 from the National
Science Foundation, and grant RO1-EB001988-15 from the National Institutes of
Health.

\bibliographystyle{plainnat}
\bibliography{tibs,trevor,fast_als}

\appendix

\section{Proofs from Section~\ref{conv-rates-ch}}\label{sec:pf-append1}

\subsubsection{Proof of Lemma~\ref{lem:diff-fast-als1}}
To prove this we begin with the following elementary result concerning a ridge regression problem:
\begin{lem}\label{lem:diff-ridge1}
Consider a ridge regression problem 
\begin{equation}\label{ridge-reg-eqn}
H(\beta) : = \half \| y - M \beta \|_2^2 + \frac{\lambda}{2} \| \beta \|_2^2
\end{equation}
with $\beta^*\in \argmin_{\beta} \; H(\beta)$.
Then the following inequality is true:
$$ H(\beta)  - H(\beta^*) = \frac{1}{2} (\beta - \beta^*)^T(M^TM + \lambda \M{I} ) (\beta - \beta^*) = \half \| M(\beta - \beta^*)\|_2^2 + \frac{\lambda}{2} \| \beta - \beta^*\|_2^2$$
\begin{proof}
The proof follows from the second order Taylor Series expansion of $H(\beta)$:
$$H(\beta) = H(\beta^*) + \langle \nabla H(\beta^*), \beta - \beta^* \rangle  +  \frac{1}{2} (\beta - \beta^*)^T(M^TM + \lambda \M{I} ) (\beta - \beta^*)$$
and observing that $ \nabla H(\beta^*)  = 0$.
\end{proof}
\end{lem}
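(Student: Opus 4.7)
The plan is to exploit the fact that $H(\beta)$ is an exact quadratic in $\beta$, so its second-order Taylor expansion around any point is exact with no remainder. I would compute the gradient and Hessian explicitly: $\nabla H(\beta) = M^T(M\beta - y) + \lambda \beta = (M^T M + \lambda I)\beta - M^T y$ and $\nabla^2 H(\beta) = M^T M + \lambda I$, which is constant in $\beta$.

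Next, I would expand around $\beta^*$:
\begin{equation*}
H(\beta) = H(\beta^*) + \langle \nabla H(\beta^*), \beta - \beta^*\rangle + \tfrac{1}{2}(\beta - \beta^*)^T (M^T M + \lambda I)(\beta - \beta^*).
\end{equation*}
Since $\beta^*$ is an unconstrained minimizer of the strictly convex quadratic $H$ (strict convexity holds whenever $\lambda > 0$; otherwise $\beta^*$ is just a stationary point, which still suffices for the first-order vanishing), the first-order term drops out: $\nabla H(\beta^*) = 0$. Rearranging gives the first equality in the displayed identity.

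For the second equality, I would simply split the quadratic form induced by $M^T M + \lambda I$:
\begin{equation*}
(\beta - \beta^*)^T (M^T M + \lambda I)(\beta - \beta^*) = \|M(\beta - \beta^*)\|_2^2 + \lambda\|\beta - \beta^*\|_2^2,
\end{equation*}
after which dividing by $2$ finishes the proof. There is no real obstacle here: the entire argument is that a quadratic function equals its second-order Taylor polynomial exactly, and the only thing to verify is the form of the Hessian and that the gradient vanishes at the minimizer. The only mild care point worth flagging is that the statement is written as an equality (not just a lower bound), which is why exactness of the Taylor expansion — a property special to quadratics — is essential rather than a generic strong-convexity inequality.
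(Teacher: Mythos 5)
Your proposal is correct and follows essentially the same route as the paper: an exact second-order Taylor expansion of the quadratic $H$ around $\beta^*$, with the first-order term vanishing because $\nabla H(\beta^*)=0$, followed by splitting the quadratic form $(\beta-\beta^*)^T(M^TM+\lambda I)(\beta-\beta^*)$ into the two norm terms. Your added remarks (explicit gradient/Hessian computation and the observation that only stationarity, not strict convexity, is needed) merely flesh out details the paper leaves implicit.
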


We will need to obtain a lower bound on the difference
$F(A_{k+1}, B_{k}) - F(A_{k}, B_{k})$. Towards this end we make note of the following chain of inequalities:
\begin{align}
F(A_{k}, B_{k}) &=  g(A_{k}B_{k}^T)  +  \frac{\lambda}{2} (  \| A_{k} \|_F^2 +  \| B_{k} \|_F^2 )& \label{line-0}  \\
&= Q_{A} (A_{k} | A_{k} , B_{k} ) &  \label{line-1}\\
&\geq \inf_{Z_{1}} \;\;  Q_A(Z_{1} | A_{k} , B_{k} )&  \label{line-2}  \\
&= Q_A (A_{k+1} | A_{k} , B_{k} ) & \label{line-3} \\
&\geq  g(A_{k+1}B_{k}^T)  + \frac{\lambda}{2} (  \| A_{k+1} \|_F^2 +  \| B_{k} \|_F^2 ) & \label{line-4}\\
&=F(A_{k+1 }, B_{k})& \label{line-5}
\end{align}
where, Line~\eqref{line-1} follows from~\eqref{major-1-2-eq}, and~\eqref{line-4} follows from~\eqref{major-1-2}.

Clearly, from Lines~\eqref{line-5} and~\eqref{line-0} we have~\eqref{line-1-0}
\begin{align}
F(A_{k}, B_{k}) - F(A_{k+1 }, B_{k}) &\geq Q_{A} (A_{k} | A_{k} , B_{k} ) - Q_A (A_{k+1} | A_{k} , B_{k} )& \label{line-1-0} \\
&= \half \| (A_{k+1} - A_{k})B_{k}^T\|_2^2 + \frac{\lambda}{2} \| A_{k+1} - A_{k}\|_2^2, & \label{line-1-1}
\end{align}
where, \eqref{line-1-1} follows from~\eqref{line-1-0} using Lemma~\ref{lem:diff-ridge1}.

Similarly, following the above steps for the $B$-update we have:
\begin{equation}
F(A_{k}, B_{k}) - F(A_{k+1 }, B_{k+1}) \geq \half \| A_{k+1} (B_{k+1} - B_{k})^T\|_2^2 + \frac{\lambda}{2} \| B_{k+1} - B_{k}\|_2^2 \label{line-2-1}.
\end{equation}
Adding~\eqref{line-1-1} and~\eqref{line-2-1} we get~\eqref{upd-als-two-1}
concluding the proof of the lemma.

\subsubsection{Proof of Lemma~\ref{lem-1-stat-1}} \label{proof-lem-1-stat-1}

Let us use the shorthand $\Delta$ in place of 
$\Delta \left (  \left( A, B  \right ) , \left (  A^{+}, B^{+}    \right)    \right )$ as defined in~\eqref{defn:metric1-0}.

First of all observe that the result~\eqref{upd-als-two-1} can be easily replaced with 
$(A_{k}, B_{k}) \leftarrow (A, B)$ and $(A_{k+1}, B_{k+1}) \leftarrow (A^+, B^+)$.
This leads to the following:
\begin{equation}
\begin{aligned}
F(A, B) - F(A^+, B^+) & \geq  \half \left( \| ( A   - A^+) B^T \|_F^2 + \| A^+ (B^+   -  B )^T \|_F^2 \right) \\
&+ \frac{\lambda}{2} \left ( \| A   - A^+  \|_F^2  + \| B^+   -  B \|_F^2 \right).
\end{aligned}
\end{equation}

First of all, it is clear that if $A,B$ is a fixed point then $\Delta=0$.

Let us consider the converse, i.e.,  the case when $\Delta =  0$.
Note that if $\Delta = 0$ then each of the summands appearing in the definition of $\Delta$ is also zero.
We will now make use of the interesting result (that follows from the Proof of Lemma~\ref{lem:diff-fast-als1}) in~\eqref{line-1-0} and~\eqref{line-1-1} which says:
$$ Q_{A} (A | A , B ) - Q_A (A^+ | A , B ) = \half \| (A^+ - A)B^T\|_2^2 + \frac{\lambda}{2} \| A^+ - A \|_2^2 .$$
Now the right hand side of the above equation is zero (since $\Delta = 0$) which implies that,
$Q_{A} (A | A , B ) - Q_A (A^+ | A , B ) = 0$. An analogous result holds true for $B$.

Using the nesting property~\eqref{nesting-1}, it follows that 
$F(A,B) = F(A_+,B_+)$---thereby showing that $(A,B)$ is a fixed point of the algorithm.

\subsubsection{Proof of Theorem~\ref{thm:conv-rate-1}} \label{proof-thm:conv-rate-1}
We make use of~\eqref{upd-als-two-1} and add both sides of the inequality over $k = 1, \ldots, K$, which leads to:
\begin{equation}\label{bound-1-als1}
\sum_{i=1}^{K} \left ( F( A_{k}, B_{k} ) - F( A_{k+1}, B_{k+1} )   \right ) \geq \sum_{k=1}^{K} \eta_{k} \geq K ( \min_{K \geq k \geq 1} \eta_{k})
\end{equation}
Since,  $F( A_k, B_k ) $ is a decreasing sequence (bounded below) it converges to $f^{\infty}$ say. 
\begin{equation}\label{upper-tele-sum1}
\begin{aligned}
\sum_{i=1}^{K} \left ( F( A_k, B_k ) - F( A_{k+1}, B_{k+1} )   \right )   &=& F( A^{1}, B^{1} ) - F( A^{K+1}, B^{K+1} ) \\ 
&\leq& F( A^{1}, B^{1} ) - f^\infty
\end{aligned}
\end{equation}
Using~\eqref{upper-tele-sum1} along with~\eqref{bound-1-als1} we have the following convergence rate:
$$ \min_{1 \leq k \leq K } \eta_{k}  \leq \left ( F( A^{1}, B^{1} ) - F( A^{\infty}, B^{\infty} ) \right) / K, $$
thereby completing the proof of the theorem.

\subsubsection{Proof of Corollary~\ref{cor-1-conv-rate}}\label{proof:cor-1-conv-rate}

Recall the definition of $\eta_{k}$
$$ \eta_{k} =  \half \left( \| (A_{k}   - A_{k+1} ) B_{k}^T \|_F^2 + \| A_{k+1}( B_{k}   -  B_{k+1} )^T \|_F^2 \right) +  \frac{\lambda}{2} \left ( \|A_{k}  - A_{k+1}\|_F^2  + \|  B_{k}   - B_{k+1}\|_F^2 \right)$$
Since we have assumed that
$$ \ell^{U} \M{I} \succeq  B_{k}^T B_{k}  \succeq \ell^{L}\M{I}, \;\;  \ell^{U} \M{I} \succeq  A_{k}^T A_{k}   \succeq \ell^{L}\M{I}, \forall k$$ 
then we have:
$$\eta_{k}  \geq  (\frac{\ell^{L}}{2} + \frac{\lambda}{2})  \|  (A_{k}   - A_{k+1} )\|_F^2 +   (\frac{\ell^{L}}{2} + \frac{\lambda}{2}) 
 \| B_{k}   -  B_{k+1}  \|_F^2$$
Using the above in~\eqref{bound-1-als1} and assuming that $\ell_{L} >0$, we have the bound:
\begin{equation}\label{bound-diff-norm-1-dummy}
 \min_{1 \leq k \leq K } \left( \|  (A_{k}   - A_{k+1} )\|_F^2  +   \| B_{k}   -  B_{k+1}  \|_F^2 \right) \leq   
\frac{2}{(\ell^{L}+ \lambda )}\left(\frac{F( A^{1}, B^{1} ) - f^{\infty}}{K} \right) 
\end{equation}
Suppose instead of the proximity measure: 
$$\left( \|  (A_{k}   - A_{k+1} )\|_F^2  +   \| B_{k}   -  B_{k+1}  \|_F^2 \right),$$ we use the 
proximity measure:
$$\left( \|  (A_{k}   - A_{k+1} )B_{k}^T \|_F^2  +   \| A_{k+1} ( B_{k}   -  B_{k+1})  \|_F^2 \right).$$
Then observing that:
$$  \ell^{U} \|  (A_{k}   - A_{k+1} )\|_F^2 \geq \|  (A_{k}   - A_{k+1} )B_{k}^T\|_F^2, \;\;\; 
 \ell^{U} \| B_{k}   -  B_{k+1}  \|_F^2   \geq \| A_{k+1} ( B_{k}   -  B_{k+1})^T  \|_F^2 $$
we have:
$$ \eta_{k} \geq   \left( \frac{\lambda}{2\ell^{U}} + \frac12\right)\left ( 
\|  (A_{k}   - A_{k+1} )B_{k}^T\|_F^2 + \| A_{k+1} ( B_{k}   -  B_{k+1})  \|_F^2 \right).$$
Using the above bound in~\eqref{bound-1-als1} we arrive at a bound which is similar in spirit to~\eqref{bound-diff-norm-1} 
but with a different proximity measure on the step-sizes:
\begin{equation}\label{bound-diff-norm-2-dummy}
 \min_{1 \leq k \leq K } \left ( 
\|  (A_{k}   - A_{k+1} )B_{k}^T\|_F^2 + \| A_{k+1} ( B_{k}   -  B_{k+1})  \|_F^2 \right) \leq 
\frac{2 \ell^{U}}{\lambda + \ell_U}  \left(\frac{F( A^{1}, B^{1} ) - f^{\infty}}{K} \right)
\end{equation}
It is useful to contrast results~\eqref{bound-diff-norm-1} and~\eqref{bound-diff-norm-2} with the case $\lambda =0$.
\begin{equation}
 \min_{1 \leq k \leq K } \left ( 
\|  (A_{k}   - A_{k+1} )B_{k}^T\|_F^2 + \| A_{k+1} ( B_{k}   -  B_{k+1})  \|_F^2 \right) \leq \begin{cases} 
\frac{2 \ell^{U}}{\lambda + \ell_U}  \left(\frac{F( A^{1}, B^{1} ) - f^{\infty}}{K} \right)  &\text{$\lambda >0$} \\
 2\ell^{U}   \left(\frac{F( A^{1}, B^{1} ) - f^{\infty}}{K} \right) & \lambda = 0 
\end{cases}
\end{equation}
The convergence rate with the other proximity measure on the step-sizes have the following two cases:
\begin{equation}
\begin{aligned}
\min_{1 \leq k \leq K } \left( \|  (A_{k}   - A_{k+1} )\|_F^2  +   \| B_{k}   -  B_{k+1}  \|_F^2 \right) \leq   
\begin{cases}
\frac{2}{(\ell^{L}+ \lambda )}\left(\frac{F( A^{1}, B^{1} ) - f^{\infty}}{K} \right) & \text{$\lambda>0$,} \\
\frac{2}{\ell^{L}}\left(\frac{F( A^{1}, B^{1} ) - f^{\infty}}{K} \right) & \text{$\lambda=0$.} 
\end{cases}
\end{aligned}
\end{equation}

The assumption~\eqref{low-up-bound1}  $\ell^{U} \M{I} \succeq  B_{k}^T B_{k}$ and $\ell^{U} \M{I} \succeq  A_{k}^T A_{k}$
can be interpreted as an upper bounds to the locally Lipschitz constants of the gradients of $Q_{A}(Z | A_{k}, B_{k})$ and 
$Q_{B}(Z | A_{k+1}, B_{k})$ for all $k$:
\begin{equation} \label{eqn-1-1}
\begin{aligned}
\| \nabla Q_{A}(A_{k+1} | A_{k}, B_{k}) - \nabla Q_{A}(A_{k} | A_{k}, B_{k})\| &\leq& \ell^{U}\| A_{k+1} - A_{k}\|, \\
 \| \nabla Q_{B}(B_{k} | A_{k+1}, B_{k}) - \nabla Q_{B}(B_{k+1} | A_{k+1}, B_{k})\| &\leq& \ell^{U} \| B_{k+1} - B_{k}\|. 
 \end{aligned}
 \end{equation}
The above leads to convergence rate bounds on the (partial) gradients of the function $F(A,B)$, i.e., 
$$  \min_{1 \leq k \leq K }  \left(  \| \nabla_{A} f (A_{k},B_{k}) \|^2 +  \| \nabla_{B} f (A_{k+1},B_{k}) \|^2  \right) \leq  
\frac{2(\ell^{U})^2}{(\ell^{L}+ \lambda )}\left(\frac{F( A^{1}, B^{1} ) - f^{\infty}}{K} \right)$$

\subsubsection{Proof of Theorem~\ref{thm-glob-cov1}} \label{proof:thm-glob-cov1}
\begin{proof}
{\noindent Part (a):}\\
We make use of the convergence rate derived in Theorem~\ref{thm:conv-rate-1}. As $k \rightarrow \infty$, it follows that $\eta_{k} \rightarrow 0$.
This describes the fate of the objective values $F(A_k,B_k)$, but does not inform us about the properties of the sequence 
$A_k,B_k$. Towards this end, note that if $\lambda >0$, then the sequence $A_k,B_k$ is bounded and thus has a limit point. 

Let $A_*,B_*$ be any limit point of the sequence $A_k,B_k$, it follows by a simple subsequence argument that 
$F(A_k,B_k) \rightarrow F(A_*,B_*)$ and $A_*,B_*$ is a fixed point of Algorithm \ref{alg:fast-als} and in particular a first order stationary point of problem~\eqref{eq:mmmf}.

{\noindent Part (b):}\\
The sequence $(A_{k}  , B_{k})$ need not have a unique limit point, however for every limit point of 
$B_{k}$ the corresponding limit point of $A_{k}$ must be the same. 

Suppose,  $B_{k} \rightarrow B_*$ (along a subsequence $k \in \nu$). 
We will show that the sequence $A_{k}$ for $k \in \nu $ has a unique limit point.

Suppose there are two limit points of 
$A_{k}$, namely,  $A_1$ and $A_2$ and 
$A_{k_1} \rightarrow A_{1}, k_1 \in \nu_{1} \subset \nu$ and $A_{k_2} \rightarrow A_{2}, k_2 \in \nu_{2} \subset \nu$ with $A_{1} \neq A_{2}$.

Consider the objective value sequence:
$F(A_{k},B_{k})$. For fixed $B_{k}$ the update in $A$ from $A_{k}$ to 
$A^{(k+1)}$ results in 
$$F(A_{k},B_{k}) - F(A_{k+1},B_{k}) \geq \frac{\lambda}{2} \| A_{k} - A_{k+1}\|_F^2$$
Take $k_{1} \in \nu_{1}$ and $k_{2} \in \nu_{2}$, we have:
\begin{align}
F(A_{k_2},B_{k_2}) - F(A_{k_1+1},B_{k_1}) &= \left(  F(A_{k_2},B_{k_2}) -   F(A_{k_2},B_{k_1})   \right) & \nonumber \\
&+   \left( F(A_{k_2},B_{k_1}) - F(A_{k_1+1},B_{k_1}) \right)&  \label{line-line-1}\\
&\geq \left(  F(A_{k_2},B_{k_2}) -   F(A_{k_2},B_{k_1})   \right)  + 
\frac{\lambda}{2} \| A_{k_2} - A_{k_1+1}\|_F^2 & \label{line-line-2}
\end{align}
where Line~\ref{line-line-2} follows by using Lemma~\ref{lem:diff-ridge1}.
As $k_{1}, k_{2} \rightarrow \infty$, 
$B_{k_2},B_{k_1} \rightarrow B_*$ hence,  
$$F(A_{k_2},B_{k_2}) -   F(A_{k_2},B_{k_1}) \rightarrow 0, \;\; \text{and}\;\;   \| A_{k_2} - A_{k_1+1}\|_F^2 \rightarrow \|A_{2} - A_{1}\|_F^2$$
However, the lhs of~\eqref{line-line-1} converges to zero, which is a contradiction. This implies that 
$\|A_{2} - A_{1}\|_F^2 = 0$ i.e. $A_{k}$ for $k \in \nu$  has a unique limit point.

Exactly the same argument holds true for the sequence $A_{k}$, leading to the conclusion of the other part of Part (b).
\end{proof}

\section{Alternative Computing Formulas for Method of Moments}
\label{sec:moma}
In this section we present the same algorithm, but use a slightly different representation.
For matrix-completion problems, this does not make much of a difference in terms of computational load. But we also have other applications in mind, where the large matrix $X$ may be fully observed, but is very sparse. In this case we do not want to actually apply the centering operations; instead we represent the matrix as a ``sparse-plus-low-rank'' object, a class for which we have methods for simple row and column operations.

Consider the row-means (for each row $i$). We can introduce a change $\Delta_i^\alpha$ from the old $\alpha_i^o$ to the new $\alpha_i$.
Then we have
\begin{eqnarray}
  \label{eq:rowmeana}
  \sum_{j\in \Omega_i}\widetilde X_{ij}&=&\sum_{j\in \Omega_i}\frac{X_{ij}-\alpha^o_i-\Delta_i^\alpha-\beta_j}{\tau_i\gamma_j}\\
&=&0\nonumber,
\end{eqnarray}
where as before $\Omega_i=\{j| (i,j)\in \Omega\}$.
Rearranging we get
\begin{equation}
  \label{eq:alphaa}
  \Delta^\alpha_i=\frac{\sum_{j\in\Omega_i}\widetilde {X}_{ij}^o}{\sum_{j\in\Omega_i}\frac1{\tau_i\gamma_j}},\quad i=1,\ldots,m,
\end{equation}
where 
\begin{equation}
  \label{eq:xold}
  \widetilde {X}_{ij}^o=\frac{X_{ij}-\alpha^o_i-\beta_j}{\tau_i\gamma_j}.
\end{equation}
Then $\alpha_i=\alpha_i^o+\Delta_i^\alpha$.
 By symmetry, we get a similar equation for $\Delta^\beta_j$,

Likewise for the variances.
\begin{eqnarray}
  \label{eq:taua}
  \frac1{n_i}\sum_{j\in \Omega_i} \widetilde X_{ij}^2&=&  \frac1{n_i}\sum_{j\in \Omega_i}\frac{(X_{ij}-\alpha_i-\beta_j)^2}{(\tau_i\Delta_i^\tau)^2\gamma_j^2}\\ 
&=& \frac1{n_i}\sum_{j\in \Omega_i}\left(\frac{\widetilde X_{ij}^o}{\Delta^\tau_i}\right)^2\\
&=&1.\nonumber
\end{eqnarray}
Here we modify $\tau_i$ by a multiplicative factor $\Delta^\tau_i$.
Here the solution is
\begin{equation}
  \label{eq:tau2a}
  (\Delta^\tau_i)^2= \frac1{n_i}\sum_{j\in \Omega_i}(\widetilde {X}^o_{ij})^2,\quad i=1,\ldots,m.
\end{equation}
 By symmetry, we get a similar equation for $\Delta^\gamma_j$,

The method-of-moments estimators amount to iterating these four sets of equations till convergence. Now we can monitor the changes via
\begin{eqnarray}
  \label{eq:obj}
  R &=& 
\sum_{i=1}^m {\Delta^\alpha_i}^2
+
\sum_{j=1}^n {\Delta^\beta_j}^2
+
\sum_{i=1}^m \log^2\Delta^\tau_i
+
\sum_{j=1}^n \log^2\Delta^\gamma_j
\end{eqnarray}
which should converge to zero.

\end{document}